\DeclareMathOperator*{\argmax}{arg\,max}
\newtheorem{proposition}{Proposition}
\newtheorem{remark}{Remark}
\newcommand{\B}{\mathbf}
\newcommand{\BS}{\boldsymbol}
\newcommand{\T}{^\mathsf{T}}
\newcommand{\CT}{^\mathsf{H}}
\begin{document}
	
\title{Near-Field Secure Beamfocusing With Receiver-Centered Protected Zone}

\author{Cen Liu,~\IEEEmembership{Graduate Student Member,~IEEE,}
	Xiangyun Zhou,~\IEEEmembership{Fellow,~IEEE,}
	Nan Yang,~\IEEEmembership{Senior Member,~IEEE,}
	Salman Durrani,~\IEEEmembership{Senior Member,~IEEE,} and
	A. Lee Swindlehurst,~\IEEEmembership{Fellow,~IEEE}
	\thanks{
		This work was supported by the Australian Research Council’s Discovery Projects (DP220101318). An earlier version of this article was presented in part at the IEEE International Conference on Communications, Montreal, Canada, 8--12 June 2025~\cite{CenLiu25}.
		
		C. Liu, X. Zhou, N. Yang, and S. Durrani are with the School of Engineering, The Australian National University, Canberra, ACT 2601, Australia (e-mail: \{cen.liu; xiangyun.zhou; nan.yang; salman.durrani\}@anu.edu.au).
	
		A. L. Swindlehurst is with the Center for Pervasive Communications and Computing, Henry Samueli School of Engineering, University of California, Irvine, CA 92697, USA (e-mail: swindle@uci.edu).}
}



\maketitle

\begin{abstract}

This work studies near-field secure communications through transmit beamfocusing. We examine the benefit of having a protected eavesdropper-free zone around the legitimate receiver, and we determine the worst-case secrecy performance against a potential eavesdropper located anywhere outside the protected zone. A max-min optimization problem is formulated for the beamfocusing design with and without artificial noise transmission. Despite the NP-hardness of the problem, we develop a synchronous gradient descent-ascent framework that approximates the global maximin solution. A low-complexity solution is also derived that delivers excellent performance over a wide range of operating conditions. We further extend this study to a scenario where it is not possible to physically enforce a protected zone. To this end, we consider secure communications through the creation of a virtual protected zone using a full-duplex legitimate receiver. Numerical results demonstrate that exploiting either the physical or virtual receiver-centered protected zone with appropriately designed beamfocusing is an effective strategy for achieving secure near-field communications.

\end{abstract}

\begin{IEEEkeywords}
	Near-field communications, physical-layer security, beamfocusing, protected zone, artificial noise.
\end{IEEEkeywords}

\section{Introduction}

\IEEEPARstart{N}{ear-field} wireless communications have emerged as a pivotal technological paradigm for next-generation wireless networks, attracting growing interest from both academia and industry~\cite{YuanweiLiu24, YuanweiLiu23, MingyaoCui23, HaiyangZhang23}. With the deployment of large-scale antenna arrays and the exploitation of high-frequency spectrum resources, near-field communication zones can extend from tens to hundreds of meters. Within this range, the spherical wavefronts of near-field signals provide an additional dimension that can be exploited for location-based communication, presenting a transformative shift for wireless systems. This shift particularly calls for novel physical-layer security designs that move beyond traditional methods relying on far-field propagation~\cite{XiangyunZhou13, NanYang15CM, Mukherjee14}.

\subsection{Motivation and Related Work}

Techniques for physical-layer security in traditional far-field communication systems have considered the concept of a protected zone \cite{XiangyunZhou11, Romero13}, in which an eavesdropper-free area is assumed to exist in a region surrounding the transmitter. The benefit of such a transmitter-side protected zone is that the received signal-to-noise ratio (SNR) at the eavesdropper is degraded due to the distance-dependent path loss between the transmitter and eavesdropper. Besides, from a practical engineering perspective, it is often more convenient to create a protected zone around a base station or an access point (i.e., the transmitter in the downlink) since their location is typically fixed and access to them is controlled. This explains why prior work has almost always assumed that the protected zone surrounds the transmitter rather than the receiver. One exception is found in~\cite{Yavuz21} where an aerial transmitter serves ground users in the presence of ground eavesdroppers. In this specific system setup, the eavesdroppers were constrained to be on the ground, and hence could not be located close to the aerial transmitter. This constrained scenario motivated~\cite{Yavuz21} to investigate the benefit and design of the protected zone around the legitimate receivers with far-field beamforming from the transmitter.

On the other hand, the inherent \textit{beamfocusing} capability of near-field communication systems provides increased degrees of freedom (DoFs) for location-based physical-layer security designs. This observation has been exploited by a number of researchers. For example, the authors in~\cite{Anaya22} found that spherical wave propagation enables the distance of the eavesdropper from the receiver to serve as a new spatial DoF for secure communications. The work of \cite{ZhengZhang24} further revealed that positive secrecy capacities are achievable within the near-field region even when the legitimate receiver and eavesdropper are aligned in the same direction and the eavesdropper is closer to the transmitter than the legitimate receiver. Near-field hybrid beamfocusing design was investigated in \cite{Nasir24} for multiple legitimate receivers in the presence of multiple non-colluding eavesdroppers. Based on explicit electromagnetic modeling, \cite{Droulias24} analytically studied the superiority of focused beams in enhancing secrecy performance with respect to beams generated by far-field beamforming. True-time-delay-based analog beamforming for near-field wideband systems was considered in \cite{YuchenZhang24} to mitigate the beam-split effect. The authors in~\cite{BoqunZhao24} analyzed the trade-off between secrecy performance and power control for commonly-employed far- and near-field channel models. The results in \cite{YunpuZhang25} and~\cite{ZhifengTang25} showed notable secrecy gains when using artificial noise (AN) in conjunction with beamfocusing. It was also demonstrated in \cite{Ferreira24} that receive beamfocusing effectively filters out jamming signals in the uplink of a near-field communication system.

The above studies have assumed that perfect channel state information (CSI) for the eavesdropper is available at the transmitter. Robust beamfocusing designs with AN transmission from either the legitimate transmitter or receiver have been investigated in~\cite{WeijianChen25}. In~\cite{JiangongChen24}, near-field directional modulation was designed based only on the legitimate receiver's direction and range. The results in~\cite{LinlinSun25} demonstrated that a random frequency diverse array can effectively reduce the received signal strength at locations other than that of the legitimate receiver, while \cite{ZihaoTeng25} presented a dynamic symbol-level precoder that can distort the signal constellations perceived at these unintended locations. Nevertheless, none of these works considered beamfocusing design for maximizing the worst-case secrecy performance in which the eavesdropper can optimally choose its eavesdropping location.

A key insight from the prior work above is that the secrecy performance of near-field communication systems with large-scale antenna arrays is primarily influenced by the spatial disparity between the legitimate receiver and the eavesdropper. Thus, in the context of near-field physical-layer security, the optimal approach for the eavesdropper is to position itself as close as possible to the legitimate receiver so as to maximize its received SNR, since with range selectivity the transmitter can reduce the SNR when the eavesdropper is nearby. We note that~\cite{Anaya22} briefly discussed the potential benefits of delimiting an eavesdropper-free area close to the legitimate receiver; however, the design of a secure near-field communication system exploiting such a protected zone has not yet been studied, which motivates this research work.

\begin{table*}[t]
	\centering
	\caption{Comparison of Key Considerations in This Work With Literature on Near-Field Security-Oriented Design}
		\begin{tabular}{l|c|c|c|c|c|c|c|c|c|c}
			\hline
			& \cite{Anaya22} & \cite{ZhengZhang24} & \cite{Nasir24} & \cite{YuchenZhang24} & \cite{YunpuZhang25}& \cite{ZhifengTang25} & \cite{WeijianChen25}& \cite{JiangongChen24} & \cite{ZihaoTeng25} & Ours\\
			
			\rowcolor[gray]{0.9}
			Unavailability of Eavesdropper's CSI for Design & & & & & & & & \checkmark & \checkmark & \checkmark \\
			
			AN Injection from Transmitter & & & & & \checkmark & \checkmark & \checkmark & \checkmark & \checkmark & \checkmark \\
			
			\rowcolor[gray]{0.9}
			AN Injection from Full-Duplex Receiver & & & & & & & \checkmark & & & \checkmark \\
			
			Low-Complexity Design Approach & \checkmark &  &  & \checkmark & \checkmark & \checkmark & & \checkmark & \checkmark & \checkmark \\
			
			\rowcolor[gray]{0.9}
			Security against Multiple Non-Colluding Eavesdroppers & \checkmark & & \checkmark & & & & & & \checkmark & \checkmark \\
			
			Security against Worst Possible Eavesdropper Location & &  &  & & & & & & & \checkmark \\
			\hline
		\end{tabular}
	\label{table_literature}
\end{table*}

\subsection{Contributions}

Motivated by the gap noted above in prior research, in this work we consider the concept of a \textit{Receiver-Centered Protected Zone} (RCPZ) for physical-layer security, which corresponds to an eavesdropper-free area around the legitimate receiver. There are plenty of practical scenarios in which it is reasonable to assume that such a protected zone exists around legitimate receivers in the downlink of commercial or military wireless networks. For instance, designated reception areas with controlled access and physical barriers may be created to prevent eavesdroppers from entering. Moreover, legitimate users may visually inspect the surrounding environment or conduct electronic surveillance to discover and neutralize radio-frequency (RF) intruders. To take advantage of an RCPZ, we design analog beamfocusing at the transmitter as a cost-effective implementation for a large-scale antenna array. Our design aims to achieve secure communication by considering the worst case in the sense that the eavesdropper may be located anywhere outside the RCPZ.

The main contributions of this work are as follows:
\begin{itemize}
	\item To the best of our knowledge, this is the first work on near-field secure communication design that exploits a protected eavesdropper-free zone around the legitimate receiver. An RCPZ-based near-field beamfocusing design enables strong signal reception at the receiver while constraining the quality of signal reception at all possible eavesdropping locations outside the protected zone.
	
	\item We develop cutting-edge beamfocusing designs both with and without AN generated by the transmitter. We formulate a max-min optimization problem to investigate the system's worst-case secrecy performance. Despite the NP-hardness of the nonconcave-nonconvex (NCNC) max-min optimization, we provide an approximate global maximin solution using a proposed synchronous gradient descent-ascent framework.
	
	\item We develop an alternative beamfocusing design with low computational complexity, which is shown to achieve excellent secrecy performance over a wide range of scenarios.
	
	\item We further extend our study to the scenario where a physical RCPZ is no longer possible and, as a result, there is no restriction on the location of the eavesdropper. For this scenario, we consider a full-duplex legitimate receiver capable of simultaneously receiving information signals and transmitting AN signals. The AN signal transmitted by the receiver discourages the eavesdropper from locating itself too close to the receiver, effectively creating a virtual protected zone. We then provide an approximate global maximin beamfocusing solution via an extended synchronous gradient descent-ascent framework. Numerical results demonstrate that the self-interference cancellation (SIC) performance of the full-duplex receiver plays a vital role in achieving effective secrecy performance.
\end{itemize}

For ease of comparison, Table~\ref{table_literature} summarizes the key considerations in this work and other related literature on near-field security-oriented system design.

\subsection{Organization and Notation}

The rest of this paper is organized as follows. Section~\ref{section_system_model} outlines the system model and introduces the concept of the RCPZ. Section~\ref{section_problem_formulation} formulates the max-min secrecy optimization problem using beamfocusing. In Sections~\ref{section_RCPZ_SGDA} and \ref{section_RCPZ_low_complexity}, we propose the synchronous gradient descent-ascent framework to solve the max-min optimization problem and further develop an alternative low-complexity solution. In Section~\ref{section_RCVPZ}, we extend our study to the scenario of a virtual protected zone and propose an extended synchronous gradient descent-ascent framework for this scenario. Section~\ref{section_numerical_results} presents numerical results to evaluate the performance of the proposed designs. Finally, we conclude the paper in Section~\ref{section_conclusion}.

\textit{Notation:} Scalars, vectors and matrices are denoted by italic, bold-face lower-case and bold-face upper-case letters, respectively. Hermitian, transpose, conjugate and gradient operations are denoted by $\left(\cdot\right)\CT$, $\left(\cdot\right)\T$, $\left(\cdot\right)^*$ and $\nabla\left(\cdot\right)$, respectively. $N$-dimensional real- and complex-valued column vector spaces are respectively denoted by $\mathbb{R}^N$ and $\mathbb{C}^N$, and the space of $N\times M$ complex-valued matrices is denoted by $\mathbb{C}^{N\times M}$. The real part and modulus of a complex number are respectively denoted by $\Re\left(\cdot\right)$ and $|\cdot|$, and the Euclidean norm of a complex-valued vector is denoted by $\|\cdot\|$. The distribution of a circularly symmetric complex Gaussian random vector with mean vector $\BS\mu$ and covariance matrix $\BS\Sigma$ is denoted by $\mathcal{CN}\left(\BS\mu,\BS\Sigma\right)$. The identity matrix of size $N$ is denoted by $\B{I}_N$. The relative complement of a set $\mathcal{A}$ in a set $\mathcal{B}$ and their Cartesian product are denoted by $\mathcal{B}\setminus\mathcal{A}$ and $\mathcal{A}\times\mathcal{B}$, respectively.

\section{System Model and Receiver-Centered Protected Zone} \label{section_system_model}

\begin{figure}[t]
	\centerline{\includegraphics[width=0.9\columnwidth]{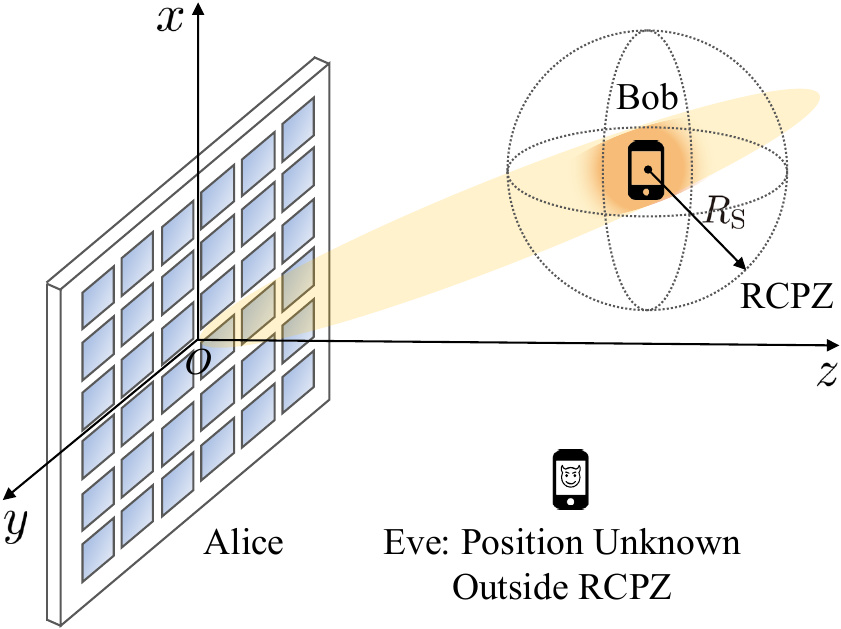}}
	\caption{The considered near-field communication system where Alice is equipped with a UPA, Bob is surrounded by an RCPZ, and Eve is located anywhere outside the RCPZ.}
	\label{fig_model}
\end{figure}
	
We consider a near-field communication system where Alice performs downlink secure transmission to Bob in the presence of an eavesdropper Eve, as illustrated in Fig.~\ref{fig_model}. A large-scale uniform planar array (UPA) is deployed by Alice with a total of $N=N_{\rm x} N_{\rm y}$ antennas, where $N_{\rm x}$ and $N_{\rm y}$ represent the number of antennas along the $x$- and $y$-axis, respectively. The square-shaped antennas, all of the same type with diagonal aperture $D$, are deployed edge-to-edge in the $xy$-plane~\cite{Bjornson21}. The \textit{Rayleigh distance} of each antenna element and \textit{Rayleigh array distance} of the entire array can be computed as $d_{\rm R} = 2D^2/\lambda$ and $d_{\rm RA} = \left(N_{\rm x}^2+N_{\rm y}^2\right)D^2/\lambda$, respectively, with $\lambda$ denoting the carrier wavelength~\cite{Selvan17, Bjornson21}. Let $m\in\left\{1,\ldots,N_{\rm x}\right\}$ and $n\in\left\{1,\ldots,N_{\rm y}\right\}$ respectively represent the row index along the $x$-axis and the column index along the $y$-axis. The $\left(m,n\right)$-th antenna is centered at the point $\B{p}_{\rm A}^{m,n} = \left[x_m,y_n,0\right]\T$, where
\begin{align}
	x_m = \left(m-\frac{N_{\rm x}+1}{2}\right)\frac{D}{\sqrt{2}},\
	y_n = \left(n-\frac{N_{\rm y}+1}{2}\right)\frac{D}{\sqrt{2}}.
\end{align}

A single-antenna legitimate receiver Bob, with a known position $\B{p}_{\rm B} = \left[x_{\rm B},y_{\rm B},z_{\rm B}\right]\T$, is located in the near field of Alice's transmit array, i.e., $d_{\rm AB} = \|\B{p}_{\rm B}\| \leq d_{\rm RA}$. The position of the eavesdropper Eve is denoted by $\B{p}_{\rm E} = \left[x_{\rm E},y_{\rm E},z_{\rm E}\right]\T$. We assume that all users (Alice, Bob and Eve) are located in the far field of the individual antennas, i.e., their distances are larger than the Rayleigh distance of a single antenna element $d_{\rm R} = 2D^2/\lambda$. This is a widely-used and practical assumption implicitly adopted in the literature of near-field communications~\cite{Anaya22, ZhengZhang24, Nasir24, Droulias24, YuchenZhang24, BoqunZhao24, YunpuZhang25, ZhifengTang25, Ferreira24, WeijianChen25, JiangongChen24, LinlinSun25, ZihaoTeng25, Bjornson21}. For example, if we assume a nominal antenna aperture of $\frac{\sqrt{2}}{2}\lambda$ for each of Alice's antennas, the value of $d_{\rm R}$ is approximately $9\, {\rm cm}$ and $1\, {\rm cm}$ with carrier frequencies of $3.5\, {\rm GHz}$ and $28\, {\rm GHz}$, respectively. The line-of-sight (LoS) channel for Bob $\B{h}_{\rm B}\in\mathbb{C}^N$ and Eve $\B{h}_{\rm E}\in\mathbb{C}^N$ are respectively modeled by
\begin{align}
	h_{\rm B}^k = \alpha_{\rm B}^k \exp\left(-j\kappa d_{\rm B}^k\right)
\end{align}
and
\begin{align}
	h_{\rm E}^k = \alpha_{\rm E}^k \exp\left(-j\kappa d_{\rm E}^k\right),
\end{align}
where the channel coefficients are indexed by $k = \left(m-1\right)N_{\rm y} + n \in\left\{1,\ldots,N\right\}$, $\kappa=2\pi/\lambda$ is the wavenumber, $\alpha_{\rm B}^k = (2\kappa d_{\rm B}^k)^{-1}$ and $\alpha_{\rm E}^k = (2\kappa d_{\rm E}^k)^{-1}$ are the free-space path gains of Bob's and Eve's channel with respect to the $k$-th transmit antenna of Alice located at $\B{p}_{\rm A}^k = \B{p}_{\rm A}^{m,n}$, and $d_{\rm B}^k = \left\|\B{p}_{\rm B} - \B{p}_{\rm A}^k\right\|$, $d_{\rm E}^k = \left\|\B{p}_{\rm E} - \B{p}_{\rm A}^k\right\|$ denote the Euclidean distance between Bob/Eve and the $k$-th transmit antenna of Alice, respectively.

The signals received by Bob and Eve are given by
\begin{align}
	r_{\rm B} = \B{h}_{\rm B}\CT \B{x} + n_{\rm B}
\end{align}
and
\begin{align}
	r_{\rm E} = \B{h}_{\rm E}\CT \B{x} + n_{\rm E},
\end{align}
respectively, where $\B{x}\in\mathbb{C}^N$ is the transmitted signal, $n_{\rm B}\sim\mathcal{CN}\left(0,\sigma_{\rm B}^2\right)$ and $n_{\rm E}\sim\mathcal{CN}\left(0,\sigma_{\rm E}^2\right)$ represent additive white Gaussian noise (AWGN). We assume that Bob’s CSI and location are known at Alice by means of channel estimation~\cite{YuanjianLi25} and/or localization techniques~\cite{YueWang18}, and they are also assumed to be known at Eve. In contrast, Eve’s CSI and location are unknown to Alice and Bob.

Considering the lack of CSI for Eve, we adopt a widely-used secure transmission strategy where Alice transmits both the information signal and an AN signal, i.e.,
\begin{align}
	\B{x} = \B{w}s + \B{z},
\end{align}
where $\B{w}\in\mathbb{C}^N$ denotes the near-field analog beamfocusing vector at Alice, $s$ is the information signal with normalized power $\mathbb{E}\left[|s|^2\right]=1$, and $\B{z}\in\mathbb{C}^N$ denotes the AN signal. The AN signal generated by Alice is chosen to lie in the null space of Bob's channel: $\B{h}_{\rm B}\CT\B{z}=0$, which is a widely adopted AN transmission strategy~\cite{Goel08, XiangyunZhou10, NanYang15, YuanjianLi17}. Then, the signals received by Bob and Eve are given by, respectively,
\begin{align}
	r_{\rm B} = \B{h}_{\rm B}\CT \B{w}s + n_{\rm B}
\end{align}
and
\begin{align}
	r_{\rm E} = \B{h}_{\rm E}\CT \B{w}s + \B{h}_{\rm E}\CT\B{z} + n_{\rm E}.
\end{align}
Let $\B{Z}\in\mathbb{C}^{N\times\left(N-1\right)}$ be an orthonormal basis for the null space of $\B{h}_{\rm B}\CT$, so that $\B{Z}\CT\B{Z} = \B{I}_{N-1}$ and $\B{z}=\B{Z}\B{v}$. The elements of $\B{v}\in\mathbb{C}^{N-1}$ are chosen to be i.i.d. complex Gaussian with zero mean and variance $\sigma_v^2$, i.e., $\B{v}\sim\mathcal{CN}\left(\B{0},\sigma_v^2\B{I}_{N-1}\right)$. Defining a power allocation factor $\phi = \|\B{w}\|^2/P_{\rm A}$ as the ratio of the power of the information bearing signal to the total transmit power $P_{\rm A}$ at Alice, the variance of each element of $\B{v}$ is given by $\sigma_v^2 = \left(1-\phi\right)P_{\rm A}/\left(N-1\right)$. Here, we note that the special case of no AN corresponds to $\phi=1$.

In this work, we exploit the availability of an RCPZ, which is a known eavesdropper-free region centered at Bob. The shape of the RCPZ can in general be arbitrary. For ease of analysis, we define a \textit{Security Radius} $R_{\rm S}$, inside of which Bob can ensure (e.g., by inspection, detection or physical obstruction) that no eavesdropper is present. Mathematically, this assumption can be expressed as $\left\|\B{p}_{\rm E}-\B{p}_{\rm B}\right\|\geq R_{\rm S}$. From the perspective of physical-layer security, the RCPZ prevents Eve from positioning herself near Bob and hindering secure communications.

\section{Problem Formulation}\label{section_problem_formulation}

\subsection{Max-Min Optimization for Worst-Case Security} \label{section_max_min_optimization}

The secrecy capacity for transmission from Alice to Bob can be written as $C_{\rm S}=\left(C_{\rm B}-C_{\rm E}\right)^+$ where $\left(\cdot\right)^+ =\max\left\{\cdot,0\right\}$.
\begin{align}
	C_{\rm B} = \log_2\left(1 + \frac{|\B{h}_{\rm B}\CT \B{w}|^2}{\sigma_{\rm B}^2}\right)
\end{align}
and
\begin{align}
	C_{\rm E} = \log_2\left(1 + \frac{|\B{h}_{\rm E}\CT \B{w}|^2}{\frac{\left(1-\phi\right)P_{\rm A}}{N-1} \|\B{h}_{\rm E}\CT\B{Z}\|^2 + \sigma_{\rm E}^2}\right)
\end{align}
are the individual channel capacities for Bob and Eve, respectively. We consider the problem of jointly optimizing the transmit power $P_{\rm A}$, the power allocation factor $\phi$, and the beamfocusing vector $\B{w}$ at Alice to maximize the secrecy capacity, while assuming that Eve can intelligently\footnote{An intelligent Eve is capable of moving to the best eavesdropping location (outside the RCPZ) with the highest signal-to-interference-plus-noise ratio (SINR) based on her knowledge of Alice's design strategy.} choose her location anywhere outside the RCPZ. This represents the worst-case scenario from the legitimate user’s perspective and leads to the following max-min optimization problem:
\begin{subequations}
	\label{max_min_problem}
	\begin{align}
		\text{(P1)}: \max_{P_{\rm A}, \phi, \B{w}}\, &\min_{\B{p}_{\rm E}}\ C_{\rm S} \left(P_{\rm A}, \phi, \B{w}, \B{p}_{\rm E}\right)\\
		{\rm s.t.}\
		& 0 \leq P_{\rm A} \leq P_{\rm TX},\label{power_budget_const}\\
		& 0 \leq \phi \leq 1,\label{alloc_const}\\
		& \|\B{w}\|^2 = \phi P_{\rm A},\label{signal_power_const}\\
		& |w_i| = |w_j|,\ \forall i,j\in\left\{1,2,\ldots,N\right\},\label{umod_const}\\
		& \|\B{p}_{\rm E}-\B{p}_{\rm B}\|\geq R_{\rm S},\label{rcpz_const}\\
		& \|\B{p}_{\rm E}-\B{p}_{\rm A}^k\| \geq d_{\rm R},\ \forall k\in\left\{1,2,\ldots,N\right\},\label{ff_const}
	\end{align}
\end{subequations}
where (\ref{power_budget_const}) limits the transmit power at Alice within a power budget $P_{\rm TX}$, (\ref{alloc_const}) limits the value of the power allocation factor, (\ref{signal_power_const}) sets the transmit power of the information bearing signal, (\ref{umod_const}) incorporates the unit-modulus constraint of the analog beamformer at Alice, (\ref{rcpz_const}) defines the RCPZ, and (\ref{ff_const}) models the assumption that Eve is located in the far field of each of Alice's antenna elements.

The optimization problem in (P1) is an NCNC max-min problem with nonconvex constraints (\ref{signal_power_const})$\sim$(\ref{ff_const}), since the objective is neither a concave function of the outer maximization variables nor a convex function of the inner minimization variable. Moreover, it can be viewed as a two-player sequential game~\cite{Razaviyayn20} between Alice and Eve with the secrecy capacity as the payoff. Alice, the first player, determines the optimal transmission design $\left\{P_{\rm A}^{\rm max}, \phi^{\rm max}, \B{w}^{\rm max}\right\}$ that maximizes the worst-case secrecy capacity $\tilde{C}_{\rm S} =\min_{\B{p}_{\rm E}} C_{\rm S}$ without prior knowledge of Eve's strategy, i.e., her position $\B{p}_{\rm E}$. Given Alice's strategy, Eve aims to minimize the secrecy capacity $C_{\rm S}$ by selecting the best position $\B{p}_{\rm E}^{\rm min}$ for eavesdropping. While the problem in (P1) focuses on one eavesdropper at an unknown location, this work remains valid for scenarios with an arbitrary number of non-colluding eavesdroppers in the sense of the worst-case secrecy performance.

\subsection{Focal-Point-Based Beamfocusing Design}

The first-order gradient descent-ascent (GDA) framework is commonly utilized to iteratively solve max-min optimization problems and converge to a local maximin stationary solution~\cite{ChiJin20}. However, it has been empirically observed that GDA algorithms may diverge when the outer variables do not converge slowly enough~\cite{TianyiLin20} and/or high-dimensional feasible sets are not sufficiently restricted. To avoid this, we propose a dimensionality reduction approach for designing the beamfocusing vector by explicitly designating the position of the beam's focal point (FP) $\B{p}_{\rm F} = \left[x_{\rm F},y_{\rm F},z_{\rm F}\right]\T$. The vector $\B{w}_{\rm F} = \sqrt{\phi P_{\rm A}} \B{\bar h}_{\rm F}$ implements maximum ratio transmission (MRT) analog beamfocusing at the FP, where $\B{\bar{h}}_{\rm F}$ denotes the FP LoS channel whose elements are given by $\bar{h}_{\rm F}^k = \frac{1}{\sqrt{N}} \exp\left(-j\kappa d_{\rm F}^k\right)$, where $d_{\rm F}^k = \left\|\B{p}_{\rm F} - \B{p}_{\rm A}^k\right\|$. Based on the problem formulation in (P1), the FP-based beamfocusing design for the worst-case secrecy capacity can be reformulated as
\begin{subequations}
	\label{max_min_fp_problem_v1}
	\begin{align}
		\text{(P2)}: \max_{P_{\rm A}, \phi, \B{p}_{\rm F}}\, &\min_{\B{p}_{\rm E}}\ C_{\rm S} \left(P_{\rm A}, \phi, \B{p}_{\rm F}, \B{p}_{\rm E}\right)\\
		{\rm s.t.}\
		& \B{p}_{\rm F}\in\mathcal{R}_{\rm B},\label{fp_const}\\
		& {\rm \left(\ref{power_budget_const}\right)}, {\rm \left(\ref{alloc_const}\right)}, {\rm \left(\ref{rcpz_const}\right)}, {\rm \left(\ref{ff_const}\right)},\notag
	\end{align}
\end{subequations}
where the convex constraint (\ref{fp_const}) explicitly restricts the FP position to lie along the ray $\mathcal{R}_{\rm B} = \left\{\B{p}\in\mathbb{R}^3|\B{p}=d_{\rm F}\B{p}_{\rm B}/\|\B{p}_{\rm B}\|,\forall d_{\rm F}>0\right\}$ directed from the origin towards Bob. It is natural to consider (\ref{fp_const}) since even a slight deviation of the FP from $\mathcal{R}_{\rm B}$ can cause a considerable SNR degradation at Bob, given the pencil-like nature of the beams created by the large-scale antenna array. Note that the transmit power constraint (\ref{signal_power_const}) of the information bearing signal and the unit-modulus constraint (\ref{umod_const}) are eliminated from the problem in (P2) because they are already satisfied by the FP-based beamfocusing.

\subsection{Optimal Transmit Power at Alice}

Here, we derive the optimal transmit power at Alice to further simplify the problem in (P2).
\begin{proposition}
	The optimal value for $P_{\rm A}$ in the max-min problem of (P2) is achieved when Alice transmits with full power, i.e.,
	\begin{align}
		P_{\rm A}^{\rm opt} = \argmax_{0 \leq P_{\rm A} \leq P_{\rm TX}} C_{\rm S} \left(P_{\rm A}, \bar\phi, \B{\bar p}_{\rm F}, \B{\bar p}_{\rm E}\right) = P_{\rm TX}
	\end{align}
	for any feasible power allocation factor $\bar\phi$, FP position $\B{\bar p}_{\rm F}$ and location for Eve $\B{\bar p}_{\rm E}$.
	\label{proposition_1}
\end{proposition}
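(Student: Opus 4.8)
The plan is to substitute the focal-point beamfocuser $\B{w}_{\rm F}=\sqrt{\bar\phi P_{\rm A}}\,\B{\bar h}_{\rm F}$ into the two rate expressions so that, with $\bar\phi$, $\B{\bar p}_{\rm F}$ and $\B{\bar p}_{\rm E}$ held fixed, the only remaining free variable is the scalar $P_{\rm A}$. This turns $C_{\rm B}$ into $\log_2(1+\mu P_{\rm A})$ and $C_{\rm E}$ into $\log_2\bigl(1+\nu P_{\rm A}/(\rho P_{\rm A}+\sigma_{\rm E}^2)\bigr)$, where $\mu=\bar\phi|\B{h}_{\rm B}\CT\B{\bar h}_{\rm F}|^2/\sigma_{\rm B}^2\ge 0$, $\nu=\bar\phi|\B{h}_{\rm E}\CT\B{\bar h}_{\rm F}|^2\ge 0$ and $\rho=\tfrac{1-\bar\phi}{N-1}\|\B{h}_{\rm E}\CT\B{Z}\|^2\ge 0$ are nonnegative constants independent of $P_{\rm A}$. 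The structural feature to exploit is that Bob's SNR grows linearly and without bound in $P_{\rm A}$ (there is no interference at Bob), whereas Eve sees both the signal and the AN scaled by $P_{\rm A}$, so her SINR is a bounded, saturating function of $P_{\rm A}$.

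Because $(\cdot)^+$ is nondecreasing, I would first reduce the problem to the difference $g(P_{\rm A}):=C_{\rm B}-C_{\rm E}$ via the identity $\max_{P_{\rm A}}(g)^+=\bigl(\max_{P_{\rm A}}g\bigr)^+$. Writing $g=\log_2 h$ with $h(P_{\rm A})=\dfrac{(1+\mu P_{\rm A})(\rho P_{\rm A}+\sigma_{\rm E}^2)}{(\rho+\nu)P_{\rm A}+\sigma_{\rm E}^2}$, a quadratic-over-linear rational function, the sign of $g'$ is that of the numerator of $h'$, which after the quotient rule simplifies to a quadratic $q(P_{\rm A})=a_2 P_{\rm A}^2+a_1 P_{\rm A}+a_0$ with leading and linear coefficients $a_2=\mu\rho(\rho+\nu)\ge0$, $a_1=2\mu\rho\sigma_{\rm E}^2\ge0$, and constant term $a_0=\sigma_{\rm E}^2\bigl(\mu\sigma_{\rm E}^2-\nu\bigr)$.

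The argument then splits on the sign of $a_0=q(0)$. If $a_0\ge0$, then $q\ge0$ on $[0,P_{\rm TX}]$ as a sum of nonnegative terms, so $g$ is nondecreasing and its maximum is attained at $P_{\rm A}=P_{\rm TX}$. If $a_0<0$, then since $a_2,a_1\ge0$ the quadratic $q$ changes sign at most once on $P_{\rm A}>0$ (from negative to positive), so $g$ is first nonincreasing then nondecreasing, i.e.\ quasiconvex, whence $\max_{[0,P_{\rm TX}]}g=\max\{g(0),g(P_{\rm TX})\}$. Evaluating $g(0)=\log_2 h(0)=\log_2(\sigma_{\rm E}^2/\sigma_{\rm E}^2)=0$ (both capacities vanish at zero power), I conclude $\max_{[0,P_{\rm TX}]}C_{\rm S}=\bigl(\max\{0,g(P_{\rm TX})\}\bigr)^+$, which is always attained at $P_{\rm A}=P_{\rm TX}$: when $g(P_{\rm TX})\ge0$ the maximizer is literally $P_{\rm TX}$, and when $g(P_{\rm TX})<0$ the optimal value equals $0=C_{\rm S}(P_{\rm TX})$, so full power remains optimal.

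The main obstacle, and the place where care is needed, is that $g$ need \emph{not} be globally monotone in $P_{\rm A}$: it can dip below its initial value before rising again. The proof therefore cannot rest on a bare ``$g'\ge0$'' claim and must instead establish the decreasing-then-increasing shape from the coefficient signs of $q$, then combine it with the boundary value $g(0)=0$ to rule out any interior maximizer. I would also dispatch the degenerate subcases explicitly--$\bar\phi=0$ (no information signal, so $C_{\rm S}\equiv0$ and any power is optimal) and $\rho=0$ (e.g.\ $\bar\phi=1$, where $q$ collapses to the constant $a_0$ of sign equal to that of $\mu\sigma_{\rm E}^2-\nu$)--to confirm the conclusion at the endpoints of the feasible set for $\bar\phi$ as well.
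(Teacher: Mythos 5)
Your proposal is correct and follows essentially the same route as the paper's proof: reduce to a one-variable function of $P_{\rm A}$, show that the sign of its derivative is governed by a quadratic whose leading and linear coefficients are nonnegative (your $q$ equals the paper's $f$ up to the positive factor $\sigma_{\rm B}^2$), deduce the nonincreasing-then-nondecreasing shape, and settle the matter at the endpoints using $g(0)=0$. Your explicit use of $\max_{P_{\rm A}}(g)^+=\bigl(\max_{P_{\rm A}}g\bigr)^+$ and the separate treatment of $\bar\phi=0$ are just cleaner bookkeeping for what the paper handles via its ``suppose there exists $P_{\rm A}'$ with $C_{\rm S}(P_{\rm A}')>0$, otherwise trivial'' device.
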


\begin{proof}
	The proof is provided in Appendix~\ref{appendix_A}.
\end{proof}

Based on Proposition~\ref{proposition_1}, the problem formulation in (P2) can be reduced to the following equivalent max-min problem with the optimization variable $P_{\rm A}$ removed:
\begin{subequations}
	\label{max_min_fp_problem_v2}
	\begin{align}
		\text{(P3)}:\ \max_{\phi, \B{p}_{\rm F}}\ &\min_{\B{p}_{\rm E}}\ C_{\rm S} \left(\phi, \B{p}_{\rm F}, \B{p}_{\rm E}\right)\\
		{\rm s.t.}\
		& {\rm \left(\ref{alloc_const}\right)}, {\rm \left(\ref{rcpz_const}\right)}, {\rm \left(\ref{ff_const}\right)}, {\rm \left(\ref{fp_const}\right)}, \notag
	\end{align}
\end{subequations}
where the power budget $P_{\rm TX}$ is assigned to the transmit power $P_{\rm A}$ in the objective function. Nevertheless, (P3) remains an NP-hard NCNC max-min problem due to the infinitely many possible eavesdropping positions outside the RCPZ.

\section{Proposed Sychronous Gradient Descent-Ascent Framework} \label{section_RCPZ_SGDA}

In this section, a synchronous gradient descent-ascent (SGDA) framework is proposed to obtain the approximate global optimal solution to the max-min problem in (P3). The SGDA framework alternates between using the gradient projection method for the outer secrecy capacity maximization subproblem and the augmented Lagrangian method for the inner secrecy capacity minimization subproblem.

\subsection{Optimal Power Allocation Strategy}

Considering a fixed FP $\B{\bar p}_{\rm F}$ and a fixed position for Eve $\B{\bar p}_{\rm E}$, the optimal allocation of Alice's transmit power between the information bearing signal and the AN signal is determined by the following proposition.
\begin{proposition}
	The optimal power allocation factor $\phi^{\rm max}$ for the max-min problem in (P3) is obtained as
	\begin{align}
		\Phi\left(\B{\bar p}_{\rm F},\B{\bar p}_{\rm E}\right) = 
		\left\{
		\begin{aligned}
			&\, \multirow{2}{*}{$\min\left\{\phi_a^+,1\right\}$,} &\gamma>\beta>0 {\rm\ or}\\
			& &0<\gamma<\beta, \Delta>0,\\
			&\min\left\{\phi_b^+,1\right\}, &\gamma=\beta>0,\\
			&H\left( \alpha\sigma_{\rm E}^2 - \beta\sigma_{\rm B}^2 \right), &\gamma=0,\\
			&1, &\gamma>\beta=0,\\
			&0, &0<\gamma<\beta, \Delta\leq0,
		\end{aligned}
		\right.
	\end{align}
	for any feasible position of the FP $\B{\bar p}_{\rm F}$ and Eve $\B{\bar p}_{\rm E}$. Definitions for $\phi_a$, $\phi_b$, $\alpha$, $\beta$, $\gamma$, $\Delta$ and $H\left(\cdot\right)$ are provided in Appendix~\ref{appendix_B}. 
	\label{proposition_2}
\end{proposition}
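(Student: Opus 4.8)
The plan is to invoke Proposition~\ref{proposition_1} to fix $P_{\rm A}=P_{\rm TX}$ and then hold the positions $\B{\bar p}_{\rm F}$ and $\B{\bar p}_{\rm E}$ fixed, so that all channel quantities $\B{h}_{\rm B},\B{h}_{\rm E},\B{Z}$ and the focal-point steering vector $\B{\bar h}_{\rm F}$ become constants and the objective reduces to a single-variable function of the power-allocation factor $\phi\in[0,1]$. Since $\log_2(\cdot)$ is strictly increasing and both logarithm arguments stay positive on $[0,1]$, maximizing $C_{\rm B}-C_{\rm E}$ is equivalent to maximizing the ratio $g(\phi)=(1+\phi u)\big((1-\phi)q+\sigma_{\rm E}^2\big)\big/\big((1-\phi)q+\sigma_{\rm E}^2+\phi p\big)$, where the fixed constants $u\propto|\B{h}_{\rm B}\CT\B{\bar h}_{\rm F}|^2/\sigma_{\rm B}^2$, $p\propto|\B{h}_{\rm E}\CT\B{\bar h}_{\rm F}|^2$ and $q\propto\|\B{h}_{\rm E}\CT\B{Z}\|^2$ collect Bob's beam gain, Eve's beam gain, and Eve's artificial-noise gain. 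Thus $g$ is a quadratic-over-linear rational function with a strictly positive denominator on $[0,1]$.

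Next I would differentiate $g$: the numerator of $g'(\phi)$ is a quadratic $Q(\phi)$ whose two roots, from the quadratic formula, furnish the interior stationary points $\phi_a,\phi_b$ and whose discriminant is $\Delta$. The leading coefficient of $Q$ is proportional to $-(p-q)$, so its sign is governed by comparing $\gamma$ (Eve's artificial-noise gain, $\propto q$) with $\beta$ (Eve's signal gain, $\propto p$): the parabola opens upward when $\gamma>\beta$, opens downward when $\gamma<\beta$, and collapses to a linear equation with the single root $\phi_b$ when $\gamma=\beta$. A short computation factors $\Delta$ into manifestly positive terms times $\big(u(q+\sigma_{\rm E}^2)+q-p\big)$, which shows that $\gamma>\beta>0$ forces $\Delta>0$ automatically, whereas for $0<\gamma<\beta$ the sign of $\Delta$ must be kept as a separate condition.

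With the shape of $g$ pinned down, the remaining work is to locate the global maximizer on $[0,1]$ by comparing the admissible interior root against the endpoint values $g(0)=1$ and $g(1)$, then clamp via $\min\{(\cdot)^+,1\}$. The degenerate branches follow by inspection: when $\gamma=0$ (artificial noise cannot reach Eve) $Q$ reduces to a constant of sign $u\sigma_{\rm E}^2-p\propto\alpha\sigma_{\rm E}^2-\beta\sigma_{\rm B}^2$ with $\alpha\propto|\B{h}_{\rm B}\CT\B{\bar h}_{\rm F}|^2$, so $g$ is monotone and the optimizer is the endpoint selected by the step function $H(\alpha\sigma_{\rm E}^2-\beta\sigma_{\rm B}^2)$; when $\gamma>\beta=0$ (no signal leaks to Eve) $g=1+\phi u$ is strictly increasing, giving $\phi=1$; and when $0<\gamma<\beta$ with $\Delta\le0$, $Q$ keeps a single sign, so $g$ is non-increasing and the optimum is $\phi=0$. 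Reinstating the $(\cdot)^+$ then confirms that every branch returning $\phi=0$ is precisely the regime where $C_{\rm B}\le C_{\rm E}$ throughout, so the secrecy capacity is zero and $\phi=0$ is a legitimate maximizer.

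The hard part will be the case bookkeeping: correctly deciding which of the two roots of $Q$ is the maximizer rather than a local minimizer of the ``up--down--up'' profile that arises when the parabola opens upward, and verifying in each regime that this interior candidate indeed dominates the endpoint $g(1)$ before clamping. The algebraic reduction of $\Delta$ to a single sign-determining factor---together with the observation that $\gamma>\beta$ renders it automatically positive---is what collapses the six a priori cases into the compact piecewise form for $\Phi(\B{\bar p}_{\rm F},\B{\bar p}_{\rm E})$.
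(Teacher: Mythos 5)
Your proposal is correct and follows essentially the same route as the paper's proof: the paper differentiates $C_{\rm S}$ directly and writes $\partial C_{\rm S}/\partial\phi = h/l$ with $l>0$, and your quadratic $Q$ is exactly the paper's $h$ up to the positive factor $1/\sigma_{\rm B}^2$ (dropping the logarithm by monotonicity changes nothing), with the same leading coefficient $\alpha\gamma\left(\gamma-\beta\right)$, the same discriminant factorization $\Delta = 4\alpha\beta\gamma\left(\gamma+\sigma_{\rm E}^2\right)\left[\left(\gamma-\beta\right)\sigma_{\rm B}^2+\alpha\left(\gamma+\sigma_{\rm E}^2\right)\right]$, and the same degenerate branches. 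The one step you defer as the ``hard part''---deciding which root is the maximizer and whether an endpoint comparison against $g(1)$ is needed---is settled in the paper simply by locating the axis of symmetry $\phi_{\rm s}=\left(\gamma+\sigma_{\rm E}^2\right)/\left(\gamma-\beta\right)$: it satisfies $\phi_{\rm s}>1$ in the convex case $\gamma>\beta$ (so the larger root lies beyond $1$) and $\phi_{\rm s}<0$ in the concave case $\gamma<\beta$, hence on $\left[0,1\right]$ the derivative changes sign at most once, from positive to negative, and $\min\left\{\phi_a^+,1\right\}$ (resp. $\min\left\{\phi_b^+,1\right\}$) is the global maximizer with no up--down--up ambiguity and no endpoint check required.
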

	
\begin{proof}
	The proof is provided in Appendix~\ref{appendix_B}.
\end{proof}

\subsection{Gradient Projection Method for Outer Maximization} \label{section_GP}

With a fixed position for Eve $\B{\bar p}_{\rm E}$ and a given power allocation factor $\bar\phi$, the FP-based design finds the optimal FP position for secrecy capacity maximization, given by
\begin{align}
	\text{(SP1)}:\ \max_{\B{p}_{\rm F}\in\mathcal{R}_{\rm B}}\ C_{\rm S} \left(\bar\phi, \B{p}_{\rm F}, \B{\bar p}_{\rm E}\right).
\end{align}
Owing to the convexity of the feasible region $\mathcal{R}_{\rm B}$, the gradient projection (GP) method~\cite{GPM} can be applied to iteratively update the FP position. The $l$-th iteration of the GP algorithm can be written as
\begin{align}
	\B{p}_{\rm F}^{l+1} = \BS{P}_{\mathcal{R}_{\rm B}}\left[\B{p}_{\rm F}^l + \beta_{\rm F}^l\nabla_{\B{p}_{\rm F}} C_{\rm S}\left(\bar\phi, \B{p}_{\rm F}^l, \B{\bar p}_{\rm E}\right)\right],
\end{align}
where $\BS{P}_{\mathcal{R}_{\rm B}}\left[\cdot\right]$ denotes the Euclidean projection onto the convex set $\mathcal{R}_{\rm B}$, $\beta_{\rm F}^l$ represents the \textit{Armijo} step size selected by the backtracking line search method~\cite{NumericalOptimization}, and $\nabla_{\B{p}_{\rm F}} C_{\rm S}$ is the gradient of the objective function with respect to the FP position, the expression for which is provided in (\ref{grad_Cs_pF}) in Appendix~\ref{appendix_C}.

\subsection{Augmented Lagrangian Method for Inner Minimization} \label{section_AL}

With a fixed FP $\B{\bar p}_{\rm F}$ and power allocation factor $\bar\phi$, Eve aims to find the best eavesdropping position $\B{p}_{\rm E}^{\rm min}$ for secrecy capacity minimization, given by
\begin{align}
	\text{(SP2)}:\ \min_{\B{p}_{\rm E}}\ C_{\rm S} \left(\bar\phi, \B{\bar p}_{\rm F}, \B{p}_{\rm E}\right),\ {\rm s.t.}{\rm (\ref{rcpz_const}),(\ref{ff_const}).}
\end{align}
Replacing the non-differentiable constraints (\ref{rcpz_const}) and (\ref{ff_const}) with their differentiable counterparts (i.e., taking the square of both sides of the inequality) and introducing the slack variables $\B{s} = \left[s_0,s_1,\ldots,s_N\right]\T\in\mathbb{R}^{N+1}$, the secrecy capacity minimization can be rewritten as an equality constrained optimization problem, given by
\begin{subequations}
	\label{min_problem}
	\begin{align}
		\text{(SP3)}:\, &\min_{\B{p}_{\rm E},\B{s}\succeq0}\ C_{\rm S} \left(\bar\phi, \B{\bar p}_{\rm F}, \B{p}_{\rm E}\right)\\
		&\ {\rm s.t.}\
		R_{\rm S}^2 - \left\|\B{p}_{\rm E}-\B{p}_{\rm B}\right\|^2 + s_0 = 0,\label{equ_rcpz_const}\\
		&\qquad d_{\rm R}^2 - \left\|\B{p}_{\rm E}-\B{p}_{\rm A}^k\right\|^2 + s_k = 0,\forall k\in\left\{1,\ldots,N\right\}.\label{equ_ff_const}
	\end{align}
\end{subequations}

We now leverage the augmented Lagrangian (AL) method~\cite{ZaiwenWen10} to iteratively solve (SP3) and provide a \textit{Karush-Kuhn-Tucker} (KKT) solution. To begin with, let $\mathcal{L}_\varrho\left( \B{p}_{\rm E},\B{s};\BS\mu\right)$ denote the AL function for (SP3) with positive penalty parameter $\varrho$. Here, the dual variables $\BS\mu = \left[\mu_0,\mu_1,\ldots,\mu_N\right]\T\in\mathbb{R}^{N+1}$ are introduced to handle the equality constraints (\ref{equ_rcpz_const}) and (\ref{equ_ff_const}). The corresponding \textit{AL problem} is defined as
\begin{align}
	\text{(SP4)}: \min_{\B{p}_{\rm E},\B{s}\succeq0}\Big\{&\mathcal{L}_\varrho\left( \B{p}_{\rm E},\B{s};\BS\mu\right) = C_{\rm S} \left(\bar\phi, \B{\bar p}_{\rm F}, \B{p}_{\rm E}\right)\notag\\
	&+ \mu_0\left( R_{\rm S}^2 - \left\|\B{p}_{\rm E}-\B{p}_{\rm B}\right\|^2 + s_0 \right)\notag\\
	&+ \sum_{k=1}^{N}\mu_k\left(d_{\rm R}^2 - \left\|\B{p}_{\rm E}-\B{p}_{\rm A}^k\right\|^2 + s_k\right)\notag\\
	& + \frac{\varrho}{2}\, Q\left(\B{p}_{\rm E}, \B{s}\right)\Big\},
	\label{AL_problem_1}
\end{align}
where $Q\left(\B{p}_{\rm E}, \B{s}\right) = (R_{\rm S}^2 - \left\|\B{p}_{\rm E}-\B{p}_{\rm B}\right\|^2 + s_0)^2 + \sum_{k=1}^{N} (d_{\rm R}^2 - \left\|\B{p}_{\rm E}-\B{p}_{\rm A}^k\right\|^2 + s_k)^2$ is the quadratic penalty function for constraints (\ref{equ_rcpz_const}) and (\ref{equ_ff_const}). The slack variables $\B{s}^{\rm opt}$ satisfy the global optimality conditions of (SP4) if and only if:
\begin{align}
	&s_0^{\rm opt} = \left(\|\B{p}_{\rm E}-\B{p}_{\rm B}\|^2 - R_{\rm S}^2\ - \mu_0/\varrho\right)^+,\\
	&s_k^{\rm opt} = \left(\|\B{p}_{\rm E}-\B{p}_{\rm A}^k\|^2 - d_{\rm R}^2\ - \mu_k/\varrho\right)^+, k\in\left\{1,\ldots,N\right\}.
\end{align}
As a result, the slack variables can be eliminated from the AL function via the substitution $\mathcal{L}_\varrho\left( \B{p}_{\rm E};\BS\mu\right) = \mathcal{L}_\varrho\left( \B{p}_{\rm E},\B{s}^{\rm opt};\BS\mu\right)$, which leads to the simplified AL problem, given by
\begin{align}
	\label{AL_problem_2}
	\text{(SP5)}:\ &\min_{\B{p}_{\rm E}} \bigg\{\mathcal{L}_\varrho\left( \B{p}_{\rm E};\BS\mu\right) = C_{\rm S} \left(\bar\phi, \B{\bar p}_{\rm F}, \B{p}_{\rm E}\right)\notag\\
	&+ \frac{\varrho}{2}\bigg[ \left( R_{\rm S}^2 - \|\B{p}_{\rm E}-\B{p}_{\rm B}\|^2 + \frac{\mu_0}{\varrho} \right)^2 - \left(\frac{\mu_0}{\varrho}\right)^2\notag\\
	&+ \sum_{k=1}^{N} \left( d_{\rm R}^2 - \|\B{p}_{\rm E}-\B{p}_{\rm A}^k\|^2 + \frac{\mu_k}{\varrho} \right)^2 - \left(\frac{\mu_k}{\varrho}\right)^2 \bigg] \bigg\}.
\end{align}

The KKT solution for the secrecy capacity minimization subproblem in (SP2), i.e., the primal dual stationary pair $\left( \B{p}_{\rm E}^\star, \BS\mu^\star \right)$, can be obtained by the AL method through double-loop iterations: The outer loop adjusts the dual variables and the penalty parameter of the AL function, and the inner loop employs gradient descent to iteratively solve the AL problem in (SP5) by updating the primal variables. At the $r$-th iteration of the outer loop, depending on the level of constraint violations measured by
\begin{align}
	V_r\left(\B{p}_{\rm E}\right) = \Bigg(&\max\left\{ R_{\rm S}^2 - \left\|\B{p}_{\rm E}-\B{p}_{\rm B}\right\|^2, -\frac{\mu_0^r}{\varrho_r} \right\}^2\notag\\
	+ \sum_{k=1}^{N}&\max\left\{d_{\rm R}^2 - \|\B{p}_{\rm E}-\B{p}_{\rm A}^k\|^2, -\frac{\mu_k^r}{\varrho_r}\right\}^2\Bigg)^{1/2},
\end{align}
the dual variables either remain unchanged or are updated by
\begin{align}
	&\mu_0^{r+1} = \left[\mu_0^r + \varrho_r\left( R_{\rm S}^2 - \left\|\B{p}_{\rm E}-\B{p}_{\rm B}\right\|^2 \right)\right]^+,\\
	&\mu_k^{r+1} = \left[\mu_k^r + \varrho_r\left( d_{\rm R}^2 - \left\|\B{p}_{\rm E}-\B{p}_{\rm A}^k\right\|^2 \right)\right]^+.
\end{align}
At the $t$-th iteration of the inner loop, the gradient descent step for $\B{p}_{\rm E}$ is
\begin{align}
	\B{p}_{\rm E}^{t+1} = \B{p}_{\rm E}^t - \beta_{\rm E}^t\nabla_{\B{p}_{\rm E}} \mathcal{L}_{\varrho_r}\left( \B{p}_{\rm E}^t; \BS\mu^r \right),
\end{align}
where $\beta_{\rm E}^t$ denotes the \textit{Armijo} step size at the $t$-th iteration chosen by the backtracking line search method, and $\nabla_{\B{p}_{\rm E}} \mathcal{L}_\varrho$ is the gradient of the AL function $\mathcal{L}_\varrho\left(\B{p}_{\rm E};\BS\mu\right)$ with respect to Eve's position, the expression for which is provided in (\ref{grad_AL}) in Appendix~\ref{appendix_C}. The structure of the AL method for solving the secrecy capacity minimization subproblem in (SP2) is summarized in Algorithm~\ref{alg_ALM}.

\begin{algorithm}[t]
	\caption{AL Method for Subproblem (SP2)}
	\begin{algorithmic}[1]
		\State \textbf{initialization:} Initialize $\tilde{\B p}_{\rm E}^0 = \B{p}_{{\rm E},{\rm J}}^l$, $\BS\mu^0$ and $\varrho_0>0$. Choose constraint violation parameter $\eta>0$, precision parameter $\varepsilon>0$ and constants $0<a\leq b\leq1<c$. Set $\eta_0=1/\varrho_0^a$ and $\varepsilon_0=1/\varrho_0$.
		\Repeat\ $r=0,1,2,\ldots,R$\hfill\textit{// AL: gradient descent}
		\State $\B{\check p}_{\rm E}^0 = \tilde{\B p}_{\rm E}^r$
		\Repeat\ $t=0,1,2,\ldots,T$
		\State $\B{\check p}_{\rm E}^{t+1} = \B{\check p}_{\rm E}^t - \beta_{\rm E}^t\nabla_{\B{p}_{\rm E}} \mathcal{L}_{\varrho_r}\left( \B{\check p}_{\rm E}^t; \BS\mu^r \right)$
		\Until $\|\nabla_{\B{p}_{\rm E}} \mathcal{L}_{\varrho_r}\left(\B{\check p}_{\rm E}^{t+1};\BS\mu^r\right)\| \leq \varepsilon_r$
		\State $\tilde{\B p}_{\rm E}^{r+1} = \B{\check p}_{\rm E}^{t+1}$
		\If{$V_r\left(\tilde{\B p}_{\rm E}^{r+1}\right)\leq\eta_r$}
		\Statex \qquad\quad\textit{// Update dual variables:} $k\in\left\{1,\ldots,N\right\}$
		\State $\mu_0^{r+1} = \left[\mu_0^r + \varrho_r\left( R_{\rm S}^2 - \left\|\tilde{\B p}_{\rm E}^{r+1}-\B{p}_{\rm B}\right\|^2 \right)\right]^+$
		\State $\mu_k^{r+1} = \left[\mu_k^r + \varrho_r\left( d_{\rm R}^2 - \left\|\tilde{\B p}_{\rm E}^{r+1}-\B{p}_{\rm A}^k\right\|^2 \right)\right]^+$
		\Statex \qquad\quad\textit{// Update AL parameters}
		\State $\varrho_{r+1} = \varrho_r$, $\eta_{r+1}=\eta_r/\varrho_{r+1}^b$, $\varepsilon_{r+1}=\varepsilon_r/\varrho_{r+1}$
		\Else
		\State $\BS\mu^{r+1} = \BS\mu^r$
		\State $\varrho_{r+1} = c\varrho_r$, $\eta_{r+1}=1/\varrho_{r+1}^a$, $\varepsilon_{r+1}=1/\varrho_{r+1}$
		\EndIf
		\Until $V_r\left(\tilde{\B p}_{\rm E}^{r+1}\right)\leq\eta$ and $\|\nabla_{\B{p}_{\rm E}} \mathcal{L}_{\varrho_r}\left(\tilde{\B p}_{\rm E}^{r+1};\BS\mu^r\right)\| \leq \varepsilon$
		\State \textbf{output:} $\B{p}_{{\rm E},{\rm J}}^{l+1} = \tilde{\B p}_{\rm E}^{r+1}$
	\end{algorithmic}
	\label{alg_ALM}
\end{algorithm}

\subsection{Synchronous GDA for Global Maximin Approximation} \label{section_SGDA}

The GDA framework performs the gradient ascent and gradient descent steps in an alternating way: (i) the projected gradient ascent follows the GP method for the outer maximization as described in Section~\ref{section_GP}, and (ii) the gradient descent follows the AL method for the inner minimization as described in Section~\ref{section_AL}. As mentioned earlier, directly applying the conventional GDA method is more likely to produce convergence to a local and not necessarily the global maximin solution $\left(\phi^\star,\B{p}_{\rm F}^\star,\B{p}_{\rm E}^\star\right)$ for the problem in (P3). As such, there may exist another feasible eavesdropping position $\B{p}_{\rm E}^\prime\neq\B{p}_{\rm E}^\star$ such that $C_{\rm S}\left(\phi^\star,\B{p}_{\rm F}^\star,\B{p}_{\rm E}^\prime\right) < C_{\rm S}\left(\phi^\star,\B{p}_{\rm F}^\star,\B{p}_{\rm E}^\star\right)$. This is due to the fact that only a single candidate eavesdropping position is updated at each iteration in the conventional GDA framework. Ideally, infinitely many potential eavesdropping positions should be taken into account to obtain the global maximin stationary solution $\left(\phi^{\rm max},\B{p}_{\rm F}^{\rm max},\B{p}_{\rm E}^{\rm min}\right)$, which is obviously not possible.

For the sake of analyzing potential eavesdropping positions, we visualize the near-field beamfocusing effect across the $xz$-plane in Fig.~\ref{fig_BFC_gain}, where Alice is located at the origin $\B{p}_{\rm A} = \left(0,0,0\right)$ as shown in Fig. \ref{fig_model}, and Bob is located at $\B{p}_{\rm B} = \left(0,0,10\right)$. Fig.~\ref{fig_beamfocusing_gain} shows the normalized beamfocusing gain when only the information bearing signal is transmitted by Alice with the FP set at the position of Bob, i.e., $\phi=1$ and $\B{p}_{\rm F}=\B{p}_{\rm B}$. Fig.~\ref{fig_beamnulling_gain} demonstrates the normalized \textit{beam nulling} gain when only the null-space AN signal is transmitted by Alice, i.e., $\phi=0$. It can be observed that the behavior of beamfocusing is completely opposite to that of beam nulling: at positions along the ray $\mathcal{R}_{\rm B}$ and in the vicinity of Bob, beamfocusing creates a peak of signal strength for the information bearing signal while beam nulling results in a trough of interference strength for the AN signal. Fortunately, owing to the pencil-like beams generated by the large-scale antenna array, it can be intuitively observed that the worst-case location for Eve is most likely in one of two small regions around the following two positions: $\B{p}_{\rm 1} = \left(1-R_{\rm S}/\left\|\B{p}_{\rm B}\right\|\right)\B{p}_{\rm B}$ and $\B{p}_{\rm 2} = \left(1+R_{\rm S}/\left\|\B{p}_{\rm B}\right\|\right)\B{p}_{\rm B}$, which denote the intersection points between the ray $\mathcal{R}_{\rm B}$ and the RCPZ border, since the SINR decreases sharply as a receiver moves away from those areas. Hence, the most harmful eavesdropping locations would be near the RCPZ and the ray $\mathcal{R}_{\rm B}$, i.e., the regions around $\B{p}_{\rm 1}$ and $\B{p}_{\rm 2}$. For simplicity, we respectively model this pair of regions as Euclidean balls $\mathcal{B}_1$ and $\mathcal{B}_2$ that are centered at $\B{p}_{\rm 1}$ and $\B{p}_{\rm 2}$ with radii $\delta_{\rm 1}$ and $\delta_{\rm 2}$, i.e., $\mathcal{B}_{\rm J} = \left\{ \B{p}\in\mathbb{R}^3| \left\|\B{p}-\B{p}_{\rm J}\right\|\leq\delta_{\rm J}\right\}$, ${\rm J}\in\left\{1,2\right\}$. Recall that the RCPZ is an eavesdropper-free area given by $\mathcal{Z} = \left\{\B{p}\in\mathbb{R}^3| \left\|\B{p}-\B{p}_{\rm B}\right\| < R_{\rm S}\right\}$, the most harmful eavesdropping regions can be denoted as $\mathcal{E}_1 = \mathcal{B}_1\setminus\mathcal{Z}$ and $\mathcal{E}_2 = \mathcal{B}_2\setminus\mathcal{Z}$. Motivated by this, a synchronous GDA (SGDA) framework is proposed to obtain an approximation of the global maximin solution to the problem in (P3) by synchronously updating a pair of randomly initialized eavesdropping positions within the regions $\mathcal{E}_1$ and $\mathcal{E}_2$.

\begin{figure}[t]
	\centering
	\subfigure[Normalized beamfocusing gain of the information bearing signal.]{
		\begin{minipage}[t]{0.48\linewidth}
			\centering
			\includegraphics[width=1\columnwidth]{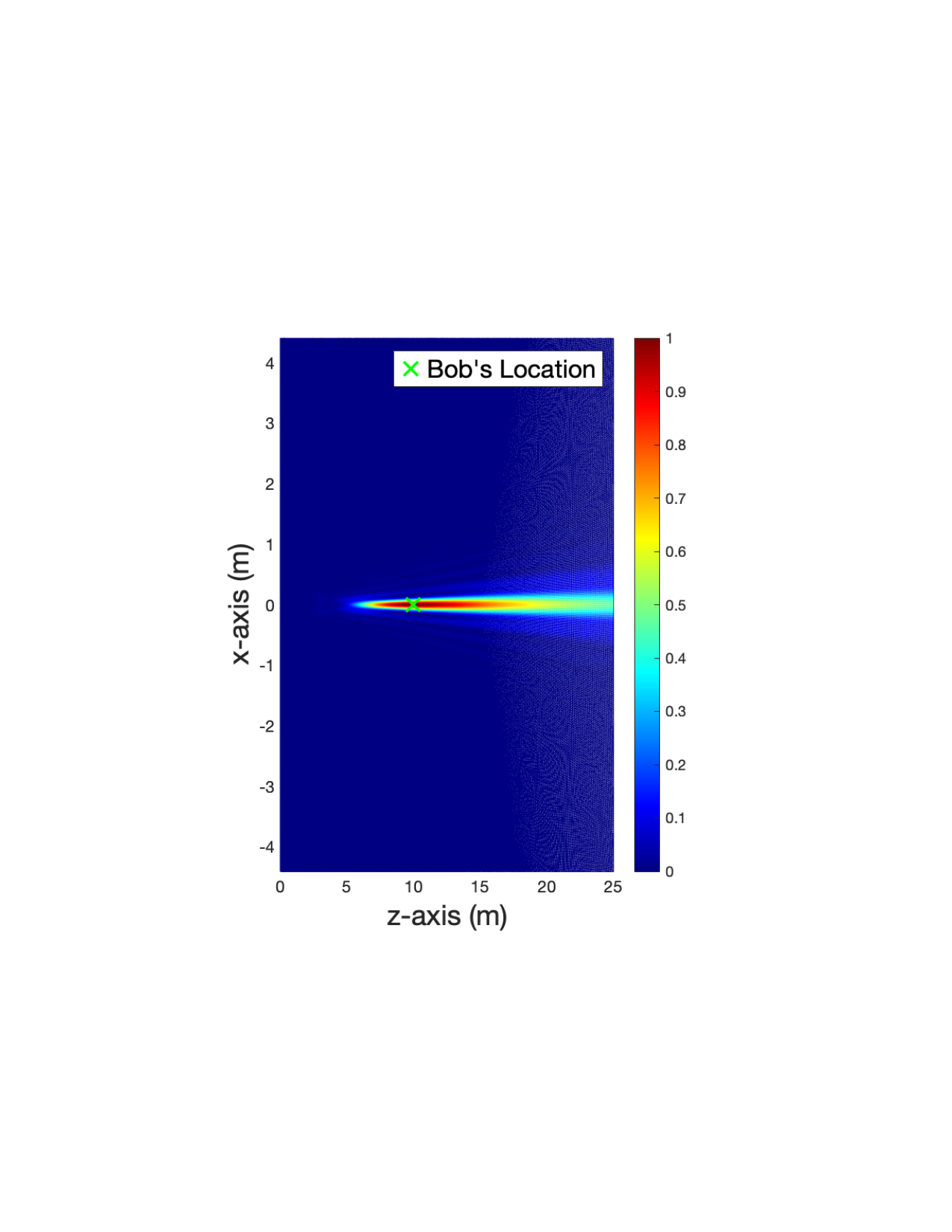}
			\label{fig_beamfocusing_gain}
	\end{minipage}}
	\subfigure[Normalized beam nulling gain of the null-space AN.]{
		\begin{minipage}[t]{0.48\linewidth}
			\centering
			\includegraphics[width=1\columnwidth]{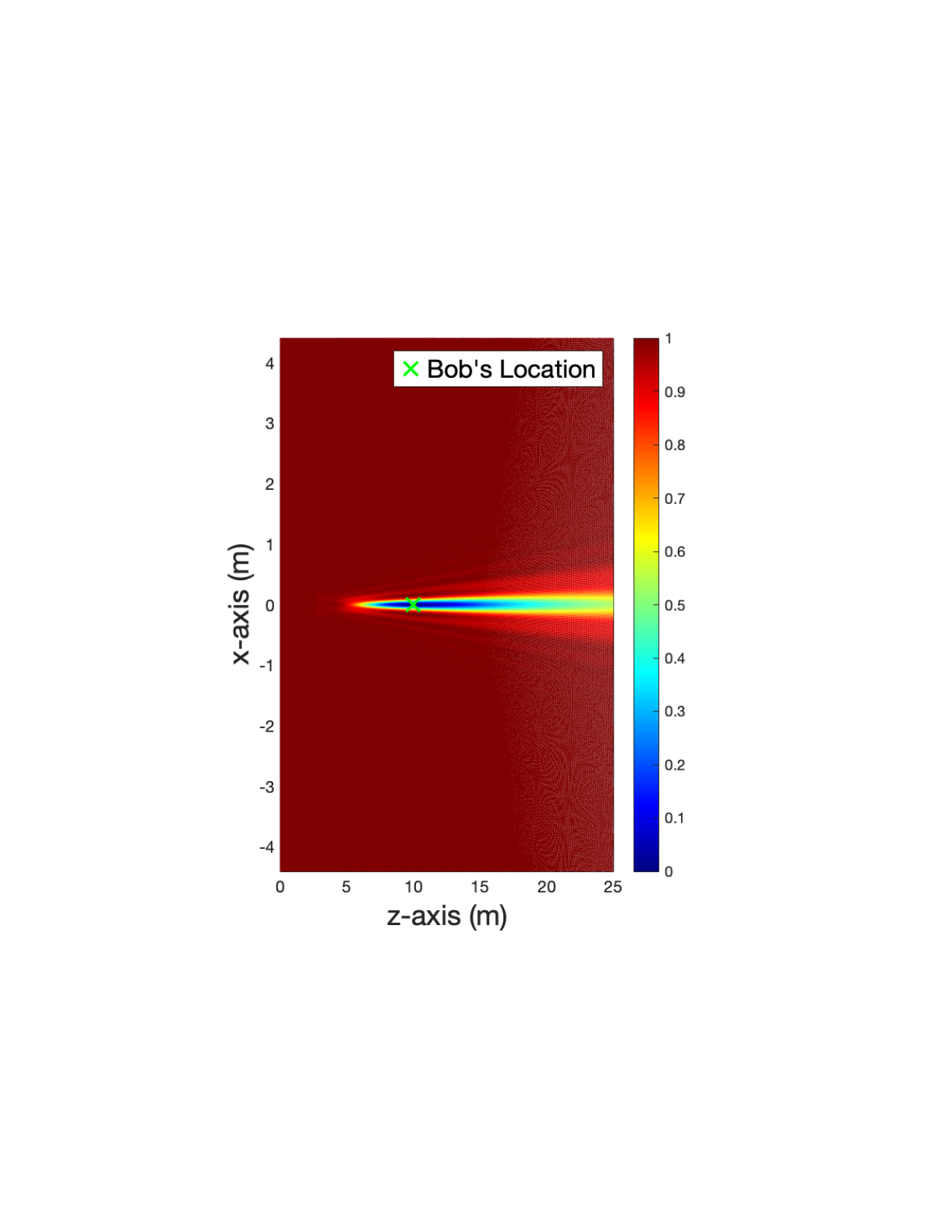}
			\label{fig_beamnulling_gain}
	\end{minipage}}
	\centering
	\caption{The impact of beamfocusing and beam nulling on information signal and AN signal transmissions, respectively.}
	\label{fig_BFC_gain}
\end{figure}

The structure of the proposed SGDA framework is summarized in Algorithm~\ref{alg_SGDA}. It outputs a solution $(\hat\phi^{\rm max},\hat{\B{p}}_{\rm F}^{\rm max},\hat{\B{p}}_{\rm E}^{\rm min})$ that approximates the global optimum, which we refer to as the Maximin solution. For the inner minimization, the resulting solution for Eve's position $\hat{\B{p}}_{\rm E}^{\rm min}$ is guaranteed to be a KKT solution satisfying $0 \approx \|\nabla_{\B{p}_{\rm E}} \mathcal{L}_\varrho( \hat\phi^{\rm max}, \hat{\B{p}}_{\rm F}^{\rm max}, \hat{\B{p}}_{\rm E}^{\rm min}; \BS\mu^\star)\| \leq \varepsilon$. For the outer maximization, the resulting power allocation factor $\hat\phi^{\rm max}$ optimally allocates the transmit power between the information bearing signal and the null-space AN signal, and the resulting FP position $\hat{\B{p}}_{\rm F}^{\rm max}$ strikes a balance between the worst-case secrecy capacity for the eavesdropping positions originating from $\mathcal{E}_1$ and $\mathcal{E}_2$. The computational complexity of the Maximin approach is of order $\mathcal{O}\left(I_{\rm GP}I_{\rm AL}N^2\right)$, where $I_{\rm GP}$ and $I_{\rm AL}$ represent the number of iterations executed by the GP and AL methods, respectively.

\begin{algorithm}[t]
	\caption{SGDA Framework for Problem (P3)}
	\begin{algorithmic}[1]
		\State \textbf{initialization:} Initialize $\B{p}_{\rm F}^0\in\mathcal{R}_{\rm B}$. Randomly sample $\B{p}_{{\rm E},1}^0\in\mathcal{E}_1$ and $\B{p}_{{\rm E},2}^0\in\mathcal{E}_2$ using a uniform distribution. Initialize $\B{p}_{\rm E}^0$ with either $\B{p}_{{\rm E},1}^0$ or $\B{p}_{{\rm E},2}^0$. Choose precision parameter $\varepsilon^\prime>0$.
		\Repeat\ $l=0,1,2,\ldots,L$
		\State $\phi^{l+1} = \Phi\left(\B{p}_{\rm F}^l, \B{p}_{\rm E}^l\right)$ is obtained by the optimal power
		\Statex\quad\, allocation strategy given $\B{p}_{\rm F}^l$, $\B{p}_{\rm E}^l$.
		\Statex\quad\, \textit{// GP: projected gradient ascent}
		\State $\B{p}_{\rm F}^{l+1} = \BS{P}_{\mathcal{R}_{\rm B}}\left[\B{p}_{\rm F}^l + \beta_{\rm F}^l\nabla_{\B{p}_{\rm F}} C_{\rm S}\left(\phi^{l+1},\B{p}_{\rm F}^l,\B{p}_{\rm E}^l\right)\right]$
		\ForAll {${\rm J}\in\left\{1,2\right\}$} \hfill\textit{// Synchronous update}
		\State Solve the subproblem in (SP2) by the AL method
		\Statex\qquad\quad given $\phi^{l+1}$, $\B{p}_{\rm F}^{l+1}$ using $\B{p}_{{\rm E},{\rm J}}^l$ as the initial position \Statex\qquad\quad for Eve.
		\EndFor
		\State $\B{p}_{\rm E}^{l+1} = \arg\min_{\B{p}_{\rm E}\in\left\{\B{p}_{{\rm E},1}^{l+1},\B{p}_{{\rm E},2}^{l+1}\right\}} C_{\rm S}\left(\phi^{l+1},\B{p}_{\rm F}^{l+1},\B{p}_{\rm E}\right)$
		\Until $\left|C_{\rm S}(\phi^{l+1},\B{p}_{\rm F}^{l+1},\B{p}_{{\rm E},1}^{l+1}) - C_{\rm S}(\phi^{l+1},\B{p}_{\rm F}^{l+1},\B{p}_{{\rm E},2}^{l+1})\right| \hspace{-0.04cm} \leq \varepsilon^\prime$
		\State \textbf{output:} $\left(\hat\phi^{\rm max},\hat{\B{p}}_{\rm F}^{\rm max},\hat{\B{p}}_{\rm E}^{\rm min}\right) = \left(\phi^{l+1},\B{p}_{\rm F}^{l+1},\B{p}_{\rm E}^{l+1}\right)$
	\end{algorithmic}
	\label{alg_SGDA}
\end{algorithm}

\begin{remark}
	For the special case where no AN signal is transmitted by Alice, the proposed SGDA framework remains valid when the power allocation factor is set as $\phi=1$ in Algorithm~\ref{alg_SGDA}. In this case, the corresponding computational complexity of the Maximin solution can be further reduced to $\mathcal{O}\left(I_{\rm GP}I_{\rm AL}N\right)$.
\end{remark}

\section{Low-Complexity Design} \label{section_RCPZ_low_complexity}

In addition to the approximate optimal solution elaborated in the previous section, we provide a low-complexity solution in this section.

\subsection{Low-Complexity Beamfocusing Design} \label{section_low_complexity_BFC}

Motivated by the discussion in Section~\ref{section_SGDA} on the most harmful eavesdropping regions, we can simplify the max-min problem in (P3) by restricting the eavesdropper to be located exactly at either of the intersection points between the ray $\mathcal{R}_{\rm B}$ and the RCPZ border, leading to
\begin{subequations}
	\label{max_min_fp_low_complexity_problem}
	\begin{align}
		\text{(P4)}:\ \max_{\phi, \B{p}_{\rm F}}\ &\min_{\B{p}_{\rm E}}\ C_{\rm S} \left(\phi, \B{p}_{\rm F}, \B{p}_{\rm E}\right)\\
		{\rm s.t.}\
		& \B{p}_{\rm E} \in \left\{\B{p}_1,\B{p}_2\right\},\label{intersection_const}\\
		& {\rm \left(\ref{alloc_const}\right)}, {\rm \left(\ref{fp_const}\right)}, \notag
	\end{align}
\end{subequations}
where the constraints (\ref{rcpz_const}) and (\ref{ff_const}) of the problem in (P3) are replaced with the constraint (\ref{intersection_const}). Then, by enumerating the finite feasible set specified by (\ref{intersection_const}), the max-min problem in (P4) can be equivalently written as the following maximization problem
\begin{subequations}
	\label{max_fp_low_complexity_problem}
	\begin{align}
		\text{(P5)}:\ &\max_{\phi, \B{p}_{\rm F}}\ \tilde{C}_{\rm S}\left(\phi, \B{p}_{\rm F}\right)\\
		&\ {\rm s.t.}\
		{\rm \left(\ref{alloc_const}\right)}, {\rm \left(\ref{fp_const}\right)}, \notag
	\end{align}
\end{subequations}
where the objective function is defined by $\tilde{C}_{\rm S}\left(\phi, \B{p}_{\rm F}\right) = \min\left\{C_{\rm S} \left(\phi, \B{p}_{\rm F}, \B{p}_1\right), C_{\rm S} \left(\phi, \B{p}_{\rm F}, \B{p}_2\right)\right\}$. Instead of optimizing the power allocation factor and the FP in an alternating way, which is commonly-adopted in the literature (e.g.,~\cite{XianghaoYu16}) but tends to incur high computational complexity, a divide-and-conquer strategy is applied to approximate the solution to (P5) in a one-shot manner. To begin with, an upper bound on $\tilde{C}_{\rm S}\left(\phi, \B{p}_{\rm F}\right)$ is provided in the following proposition.

\begin{proposition}
	For any feasible power allocation factor and FP satisfying the constraints (\ref{alloc_const}) and (\ref{fp_const}), the objective function of (P5) is upper bounded by $\tilde{C}_{\rm S}^{\rm ub}\left(\B{p}_{\rm F}\right) = \min\left\{C_{\rm S} \left(\phi_1, \B{p}_{\rm F}, \B{p}_1\right), C_{\rm S} \left(\phi_2, \B{p}_{\rm F}, \B{p}_2\right)\right\}$:
	\begin{align}
		\tilde{C}_{\rm S}\left(\phi, \B{p}_{\rm F}\right) \leq \tilde{C}_{\rm S}^{\rm ub}\left(\B{p}_{\rm F}\right),\ \forall \phi\in\left[0,1\right], \B{p}_{\rm F}\in\mathcal{R}_{\rm B},
	\end{align}
	where $\phi_1 = \Phi\left(\B{p}_{\rm F},\B{p}_1\right)$ and $\phi_2 = \Phi\left(\B{p}_{\rm F},\B{p}_2\right)$ are the optimal power allocation factors derived from Proposition~\ref{proposition_2}.
	\label{proposition_3}
\end{proposition}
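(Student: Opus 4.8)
The plan is to exploit the fact that the proposed upper bound \emph{decouples} the shared power allocation factor $\phi$ into two independent single-point optimizations, each of which is already solved in closed form by Proposition~\ref{proposition_2}. The essential observation is that $\tilde{C}_{\rm S}^{\rm ub}(\B{p}_{\rm F})$ permits a \emph{different} power allocation factor ($\phi_1$ for $\B{p}_1$ versus $\phi_2$ for $\B{p}_2$), whereas the true objective $\tilde{C}_{\rm S}(\phi,\B{p}_{\rm F})$ is constrained to use one common $\phi$ for both candidate eavesdropping positions. Relaxing this coupling can only raise each of the two terms, which is precisely the source of the claimed inequality.

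First I would fix an arbitrary feasible pair $(\phi,\B{p}_{\rm F})$ with $\phi\in[0,1]$ and $\B{p}_{\rm F}\in\mathcal{R}_{\rm B}$, and recall that, by construction, $\phi_j=\Phi(\B{p}_{\rm F},\B{p}_j)$ is the maximizer of the single-point secrecy capacity $C_{\rm S}(\cdot,\B{p}_{\rm F},\B{p}_j)$ over $[0,1]$ for each $j\in\{1,2\}$, as guaranteed by Proposition~\ref{proposition_2}. This yields the two per-position bounds
\begin{align}
	C_{\rm S}(\phi,\B{p}_{\rm F},\B{p}_j) \leq C_{\rm S}(\phi_j,\B{p}_{\rm F},\B{p}_j),\quad j\in\{1,2\}. \notag
\end{align}

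Next I would combine these with the elementary facts $\min\{a,b\}\leq a$ and $\min\{a,b\}\leq b$ applied to the definition of $\tilde{C}_{\rm S}$. Chaining through the first position gives $\tilde{C}_{\rm S}(\phi,\B{p}_{\rm F}) \leq C_{\rm S}(\phi,\B{p}_{\rm F},\B{p}_1) \leq C_{\rm S}(\phi_1,\B{p}_{\rm F},\B{p}_1)$, and chaining through the second gives $\tilde{C}_{\rm S}(\phi,\B{p}_{\rm F}) \leq C_{\rm S}(\phi_2,\B{p}_{\rm F},\B{p}_2)$. Since $\tilde{C}_{\rm S}(\phi,\B{p}_{\rm F})$ is simultaneously no larger than both $C_{\rm S}(\phi_1,\B{p}_{\rm F},\B{p}_1)$ and $C_{\rm S}(\phi_2,\B{p}_{\rm F},\B{p}_2)$, it is no larger than their minimum, which is exactly $\tilde{C}_{\rm S}^{\rm ub}(\B{p}_{\rm F})$. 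As $(\phi,\B{p}_{\rm F})$ was arbitrary, the bound holds for all feasible $\phi$ and $\B{p}_{\rm F}$.

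The argument involves no genuine analytical difficulty; the only point requiring care is the correct invocation of Proposition~\ref{proposition_2} as a \emph{per-position} optimality statement, i.e.\ that $\Phi(\B{p}_{\rm F},\B{p}_j)$ maximizes $C_{\rm S}$ over $\phi$ for that single fixed pair rather than for the coupled max-min objective. I would also note that the bound is generally strict, with equality requiring $\phi_1=\phi_2$ (or the two single-point capacities to admit a common maximizing $\phi$); hence the value of Proposition~\ref{proposition_3} lies in furnishing a tractable, closed-form surrogate that enables the subsequent divide-and-conquer design, rather than an exact reformulation of (P5).
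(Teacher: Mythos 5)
Your proof is correct and follows essentially the same route as the paper's: both invoke Proposition~\ref{proposition_2} to obtain the per-position inequalities $C_{\rm S}(\phi,\B{p}_{\rm F},\B{p}_j)\leq C_{\rm S}(\phi_j,\B{p}_{\rm F},\B{p}_j)$ for $j\in\{1,2\}$ and then conclude via the monotonicity of the minimum. Your additional spelling-out of the $\min$ manipulation and the remark on when equality holds are fine elaborations but do not change the argument.
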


\begin{proof}
	From Proposition~\ref{proposition_2}, we obtain inequalities for any power allocation factor $\phi\in\left[0,1\right]$ and FP $\B{p}_{\rm F}\in\mathcal{R}_{\rm B}$:
	\begin{align}
		C_{\rm S} \left(\phi, \B{p}_{\rm F}, \B{p}_1\right) \leq C_{\rm S} \left(\Phi\left(\B{p}_{\rm F},\B{p}_1\right), \B{p}_{\rm F}, \B{p}_1\right),\\
		C_{\rm S} \left(\phi, \B{p}_{\rm F}, \B{p}_2\right) \leq C_{\rm S} \left(\Phi\left(\B{p}_{\rm F},\B{p}_2\right), \B{p}_{\rm F}, \B{p}_2\right).
	\end{align}
	Therefore, $\tilde{C}_{\rm S}\left(\phi, \B{p}_{\rm F}\right) \leq \tilde{C}_{\rm S}^{\rm ub}\left(\B{p}_{\rm F}\right)$ is obtained based on their definitions, which completes the proof.
\end{proof}

The proposed low-complexity design consists of two sequential one-dimensional searches for: (i) the FP $\B{p}_{\rm F}^\star = \argmax_{\B{p}_{\rm F}\in\mathcal{R}_{\rm B}} \tilde{C}_{\rm S}^{\rm ub}\left(\B{p}_{\rm F}\right)$, and (ii) the power allocation factor $\phi^\star = \argmax_{0\leq\phi\leq1} \tilde{C}_{\rm S}\left(\phi, \B{p}_{\rm F}^\star\right)$ given $\B{p}_{\rm F}^\star$. The rationale behind this solution is that step (i) pursues the FP over the upper bound $\tilde{C}_{\rm S}^{\rm ub}$ by optimistically assuming that the power allocation strategy is sufficiently good, after which step (ii) further improves the power allocation strategy based on the resulting FP.

In practice, the proposed low-complexity solution always results in identical secrecy capacities $C_{\rm S} \left(\phi^\star, \B{p}_{\rm F}^\star, \B{p}_1\right) = C_{\rm S} \left(\phi^\star, \B{p}_{\rm F}^\star, \B{p}_2\right)$, which indicates that the received SINRs at both $\B{p}_1$ and $\B{p}_2$ are the same. Achieving the same SINR at $\B{p}_1$ and $\B{p}_2$ constrains the highest possible SINR at Eve while providing a relatively strong SINR at Bob. We refer to this low-complexity solution as Equal-SINRs.

\begin{remark}
	For the special case where no AN signal is transmitted by Alice, the equation $\tilde{C}_{\rm S}\left(\phi, \B{p}_{\rm F}\right) = \tilde{C}_{\rm S}^{\rm ub}\left(\B{p}_{\rm F}\right),\ \forall \B{p}_{\rm F}\in\mathcal{R}_{\rm B}$ holds as a consequence of $\phi=\phi_1=\phi_2=1$. In this case, the proposed Equal-SINRs solution reduces to achieving the same received SNR at both $\B{p}_1$ and $\B{p}_2$, which we refer to as the Equal-SNRs solution.
\end{remark}

\subsection{Computational Complexity Analysis}

The proposed low-complexity Equal-SINRs and Equal-SNRs designs require computational complexity of order $\mathcal{O}\left(N^2+C_{\rm F}N\right)$ and $\mathcal{O}\left(C_{\rm F}N\right)$, respectively, where $C_{\rm F}$ is a constant determined by the FP search precision. Note that only the secrecy capacity minimization subproblem in (SP2) needs to be solved for the proposed low-complexity approaches, since the beamfocusing vector and the power allocation factor are determined directly. The computational complexity of the various approaches discussed in the paper are listed in Table~\ref{table_complexity_RCPZ}, illustrating the significant savings of the Equal-SINRs and Equal-SNRs approaches.

\begin{table}[t]
	\centering
	\caption{Computational Complexity of Proposed Solutions with RCPZ}
	\begin{tabular}{|l|l|c|}
		\hline
		\textbf{Solution with RCPZ} & \textbf{Complexity Order} & \textbf{Null-Space AN}\\
		\hline
		\text{Maximin-w/-AN} & $\mathcal{O}\left(I_{\rm GP}I_{\rm AL}N^2\right)$ & $\checkmark$\\
		\hline
		\text{Maximin-w/o-AN} & $\mathcal{O}\left(I_{\rm GP}I_{\rm AL}N\right)$ & $\times$\\
		\hline
		\text{Equal-SINRs} & $\mathcal{O}\left(N^2+C_{\rm F}N\right)$ & $\checkmark$\\
		\hline
		\text{Equal-SNRs} & $\mathcal{O}\left(C_{\rm F}N\right)$ & $\times$\\
		\hline
		\text{Peak-SNR} & $\mathcal{O}\left(C_{\rm F}N\right)$ & $\times$\\
		\hline
		\text{Conventional-MRT} & $\mathcal{O}\left(1\right)$ & $\times$\\
		\hline
	\end{tabular}
	\label{table_complexity_RCPZ}
\end{table}

\section{Full Duplex Enabled Receiver-Centered Virtual Protected Zone} \label{section_RCVPZ}

In previous sections, we have studied how to design secure transmission when the system has an RCPZ in place to physically prevent Eve from approaching too close to Bob. In this section, we extend the study to consider a more challenging scenario where a physical RCPZ is no longer available. To guarantee security in this situation, we consider Bob to possess a full-duplex transceiver capable of simultaneously receiving information and transmitting AN \cite{ShihaoYan18, GanZheng13, WeiLi12}. The AN generated from Bob effectively creates a \textit{Receiver-Centered Virtual Protected Zone} (RCVPZ) that discourages Eve from locating herself near Bob. In the subsequent discussion, we define the \textit{Virtual Security Radius} of an RCVPZ as the distance from Bob to the worst-case eavesdropping location.

\subsection{System Model and Problem Formulation with RCVPZ}

We reconsider the near-field downlink communication system in Section~\ref{section_system_model}, where now a full-duplex Bob is able to simultaneously receive information and transmit AN omnidirectionally through a single antenna. Such a shared-antenna full duplex structure can be realized using circulators~\cite{Bharadia13, Sabharwal14}, and has been considered for practical implementation in mobile devices of limited size~\cite{Korpi16}.

The signals received by Bob and Eve are given by
\begin{align}
	\check{r}_{\rm B} = \B{h}_{\rm B}\CT \B{w}s + h_{\rm BB}u + n_{\rm B}
\end{align}
and
\begin{align}
	\check{r}_{\rm E} = \B{h}_{\rm E}\CT \B{w}s + \B{h}_{\rm E}\CT\B{z} + h_{\rm BE}u + n_{\rm E},
\end{align}
respectively, where $u\sim\mathcal{CN}\left(0,P_{\rm B}\right)$ is the AN signal of power $P_{\rm B}$ transmitted by Bob. The effective self-interference channel is modeled by $h_{\rm BB}$ with amplitude $|h_{\rm BB}|=\sqrt{\rho}$, and $0<\rho\leq1$ parameterizes the effectiveness of SIC at Bob~\cite{ShihaoYan18, GanZheng13}, which is expressed in dB as $10\log_{10}\frac{1}{\rho}$. Thus, the residual self-interference is expressed as $h_{\rm BB}u \sim\mathcal{CN}\left(0,\rho P_{\rm B}\right)$. The LoS channel between Bob and Eve is modeled by $h_{\rm BE} = \alpha_{\rm BE}\exp\left(-j\kappa d_{\rm BE}\right)$ with $\alpha_{\rm BE} = (2\kappa d_{\rm BE})^{-1}$ and $d_{\rm BE} = \|\B{p}_{\rm B}-\B{p}_{\rm E}\|$ representing the free-space path gain and the distance between Bob and Eve, respectively.

The secrecy capacity for transmission from Alice to Bob can be written as $\check{C}_{\rm S}=(\check{C}_{\rm B}-\check{C}_{\rm E})^+$, where
\begin{align}
	\check{C}_{\rm B} = \log_2\left(1 + \frac{|\B{h}_{\rm B}\CT \B{w}_{\rm F}|^2}{\rho P_{\rm B} + \sigma_{\rm B}^2}\right)
\end{align}
and
\begin{align}
	\check{C}_{\rm E} = \log_2\left(1 + \frac{|\B{h}_{\rm E}\CT \B{w}_{\rm F}|^2}{\frac{\left(1-\phi\right)P_{\rm A}}{N-1} \|\B{h}_{\rm E}\CT\B{Z}\|^2 + \alpha_{\rm BE}^2 P_{\rm B} + \sigma_{\rm E}^2}\right)
\end{align}
are individual channel capacities for Bob and Eve, respectively. Then, the FP-based beamfocusing design for the worst-case secrecy capacity can be formulated as
\begin{subequations}
	\label{max_min_fp_fd_problem}
	\begin{align}
		\text{(P6)}:\ \max_{\phi, \B{p}_{\rm F}}\ &\min_{\B{p}_{\rm E}}\ \check{C}_{\rm S} \left(\phi, \B{p}_{\rm F}, \B{p}_{\rm E}\right)\\
		{\rm s.t.}\
		& {\rm \left(\ref{alloc_const}\right)}, {\rm \left(\ref{ff_const}\right)}, {\rm \left(\ref{fp_const}\right)}. \notag
	\end{align}
\end{subequations}
Comparing the problem in (P6) with that in (P3), we note that constraint (\ref{rcpz_const}) is removed from (P6) since the physical dimension of the RCVPZ is not specified.

\subsection{Extended SGDA-Based Beamfocusing Design with RCVPZ} \label{section_extended_SGDA}

In this section, the SGDA framework elaborated in Section~\ref{section_RCPZ_SGDA} is extended to beamfocusing design with an RCVPZ. To begin, we note that the optimal power allocation strategy offered by Proposition~\ref{proposition_2} is still applicable to the scenario considered here by replacing $\sigma_{\rm B}^2$ and $\sigma_{\rm E}^2$ with $\check\sigma_{\rm B}^2 = \sigma_{\rm B}^2 + \rho P_{\rm B}$ and $\check\sigma_{\rm E}^2 = \sigma_{\rm E}^2 + \alpha_{\rm BE}^2 P_{\rm B}$, respectively.

Due to the lack of a physically imposed RCPZ, the most harmful eavesdropping locations are not necessarily the same as in the discussion of Section~\ref{section_SGDA}, so (P6) is not well suited for the SGDA framework proposed in Section~\ref{section_RCPZ_SGDA}. Extra effort is required to identify the most harmful eavesdropping regions so that the SGDA framework can be applied to (P6). Consequently, we simplify the max-min problem in (P6) by restricting the eavesdropper to lie along the ray $\mathcal{R}_{\rm B}$, leading to
\begin{subequations}
	\label{max_min_fp_fd_heuristic_problem}
	\begin{align}
		\text{(P7)}:\ \max_{\phi, \B{p}_{\rm F}}\ &\min_{\B{p}_{\rm E}}\ \check{C}_{\rm S} \left(\phi, \B{p}_{\rm F}, \B{p}_{\rm E}\right)\\
		{\rm s.t.}\
		& \B{p}_{\rm E} \in \mathcal{R}_{\rm B}, \label{eve_rb_const}\\
		& {\rm \left(\ref{alloc_const}\right)}, {\rm \left(\ref{ff_const}\right)}, {\rm \left(\ref{fp_const}\right)}. \notag
	\end{align}
\end{subequations}
We note that Problem (P7) is similar to (P4), but the constraint (\ref{intersection_const}) is replaced with (\ref{eve_rb_const}) since neither of the intersection points can be easily determined.

Following the low-complexity solution elaborated in Section~\ref{section_low_complexity_BFC}, we aim to find the worst-case eavesdropping locations along the ray $\mathcal{R}_{\rm B}$ by solving (P7). The proposed design iteratively searches for the FP and the power allocation factor as in the Equal-SINRs solution until a pair of worst-case eavesdropping positions $\{\B{p}_{{\rm E},1}^\star, \B{p}_{{\rm E},2}^\star\} \in \mathcal{R}_{\rm B}$ on either side of Bob are determined. Thereafter, the most harmful eavesdropping regions can be modeled as Euclidean balls $\mathcal{E}_1$ and $\mathcal{E}_2$ that are centered at $\B{p}_{{\rm E},1}^\star$ and $\B{p}_{{\rm E},2}^\star$ with radii $\delta_{\rm 1}$ and $\delta_{\rm 2}$, i.e., $\mathcal{E}_{\rm J} = \left\{ \B{p}\in\mathbb{R}^3| \left\|\B{p}-\B{p}_{{\rm E},{\rm J}}^\star\right\|\leq\delta_{\rm J}\right\}$, ${\rm J}\in\left\{1,2\right\}$. Given the most harmful eavesdropping regions, the SGDA framework can find the maximin stationary solution $(\hat\phi^{\rm max},\hat{\B{p}}_{\rm F}^{\rm max},\hat{\B{p}}_{\rm E}^{\rm min})$ to the problem in (P6).

The extended SGDA framework for beamfocusing design with RCVPZ described above is summarized in Algorithm~\ref{alg_SGDA_RCVPZ}, where 
\begin{align}
	&\tilde{C}_{{\rm S},l} \left(\phi, \B{p}_{\rm F}\right) = \min\left\{C_{\rm S} \left(\phi, \B{p}_{\rm F}, \B{p}_{{\rm E},1}^l\right), C_{\rm S} \left(\phi, \B{p}_{\rm F}, \B{p}_{{\rm E},2}^l\right)\right\},\\
	&\tilde{C}_{{\rm S},l}^{\rm ub}\left(\B{p}_{\rm F}\right) = \min\left\{C_{\rm S} \left(\phi_1^l, \B{p}_{\rm F}, \B{p}_{{\rm E},1}^l\right), C_{\rm S} \left(\phi_2^l, \B{p}_{\rm F}, \B{p}_{{\rm E},2}^l\right)\right\},
\end{align}
with $\phi_{\rm J}^l = \Phi\left(\B{p}_{\rm F},\B{p}_{{\rm E},{\rm J}}^l\right)$, ${\rm J}\in\left\{1,2\right\}$. The computational complexity of the extended SGDA-based design is of order $\mathcal{O}\left(\left(I_{\rm GP}I_{\rm AL} + C_{\rm E}I_{\rm S}\right)N^2\right)$, where $C_{\rm E}$ is a constant determined by the search precision on the eavesdropping location, and $I_{\rm S}$ stands for the number of iterations executed in searching for the most harmful eavesdropping regions. Expressions for the gradients $\nabla_{\B{p}_{\rm F}} \check{C}_{\rm S}$ and $\nabla_{\B{p}_{\rm E}} \mathcal{\check L}_\varrho$ are provided in (\ref{grad_Cs_pF}) and (\ref{grad_AL}) in Appendix~\ref{appendix_C}, respectively.

\begin{algorithm}[t]
	\caption{Extended SGDA Framework for Problem (P6)}
	\begin{algorithmic}[1]
		\State \textbf{initialization:} Initialize $\B{p}_{{\rm E},1}^0$, $\B{p}_{{\rm E},2}^0 \in \mathcal{R}_{\rm B}$ with $\|\B{p}_{{\rm E},1}^0\| < \left\|\B{p}_{\rm B}\right\|$ and $\|\B{p}_{{\rm E},2}^0\| > \left\|\B{p}_{\rm B}\right\|$. Choose precision parameter $\varepsilon^{\prime\prime} > 0$.
		\Repeat\ $l=0,1,2,\ldots,L$
		\State \parbox[t]{220pt}{Solve $\B{p}_{\rm F}^{l+1} = \argmax_{\B{p}_{\rm F}} \tilde{C}_{{\rm S},l}^{\rm ub} \left(\B{p}_{\rm F}\right)$ for $\B{p}_{\rm F}\in\mathcal{R}_{\rm B}$.\strut}
		\State \parbox[t]{220pt}{Solve $\phi^{l+1} = \argmax_\phi \tilde{C}_{{\rm S},l} \left(\phi, \B{p}_{\rm F}^{l+1}\right)$ for $\phi\in\left[0,1\right]$.\strut}
		\ForAll {${\rm J}\in\left\{1,2\right\}$}
		\State \parbox[t]{205pt}{Solve $\B{p}_{{\rm E},{\rm J}}^{l+1} = \argmax_{\B{p}_{{\rm E},{\rm J}}} C_{\rm S} \left(\phi^{l+1}, \B{p}_{\rm F}^{l+1}, \B{p}_{\rm E}\right)$ for $\B{p}_{{\rm E},{\rm J}}^{l+1} \in \mathcal{R}_{\rm B}$ where $\|\B{p}_{{\rm E},1}\| < \left\|\B{p}_{\rm B}\right\|$ and $\|\B{p}_{{\rm E},2}\| > \left\|\B{p}_{\rm B}\right\|$.\strut}
		\EndFor
		\Until $\|\B{p}_{{\rm E},1}^{l+1} - \B{p}_{{\rm E},1}^l \| \leq \varepsilon^{\prime\prime}$ and $\|\B{p}_{{\rm E},2}^{l+1} - \B{p}_{{\rm E},2}^l \| \leq \varepsilon^{\prime\prime}$
		\State $\B{p}_{\rm F}^\star = \B{p}_{\rm F}^{l+1}$, $\B{p}_{{\rm E},1}^\star = \B{p}_{{\rm E},1}^{l+1}$, $\B{p}_{{\rm E},2}^\star = \B{p}_{{\rm E},2}^{l+1}$
		\State Apply Algorithm~\ref{alg_SGDA} with initial FP $\B{p}_{\rm F}^\star$ and worst-case eavesdropping regions $\mathcal{E}_{\rm J} = \{ \B{p}\in\mathbb{R}^3| \|\B{p}-\B{p}_{{\rm E},{\rm J}}^\star\|\leq\delta_{\rm J}\}$, ${\rm J}\in\left\{1,2\right\}$.
		\State \textbf{output:} $\left(\hat\phi^{\rm max},\hat{\B{p}}_{\rm F}^{\rm max},\hat{\B{p}}_{\rm E}^{\rm min}\right)$
	\end{algorithmic}
	\label{alg_SGDA_RCVPZ}
\end{algorithm}

\begin{remark}
	For the special case where no AN signal is transmitted by Alice, the extended SGDA-based solution remains valid when the power allocation factor is set as $\phi=1$ in Algorithm~\ref{alg_SGDA_RCVPZ}. In this case, the corresponding computational complexity is reduced to $\mathcal{O}\left(\left(I_{\rm GP}I_{\rm AL} + C_{\rm F}I_{\rm S} + C_{\rm E}I_{\rm S}\right)N\right)$.
\end{remark}

\begin{table}[t]
	\centering
	\caption{Simulation Parameters}
	\begin{tabular}{|l|l|}
		\hline
		\textbf{Parameter} & \textbf{Value}\\
		\hline
		\text{Carrier frequency} & $f_{\rm c} = 28\, {\rm GHz}$\\
		\hline
		\text{Number of antennas} & $N_{\rm x} = N_{\rm y} = 128$, $N = 16384$\\
		\hline
		\text{Normalized antenna separation} & $\Delta_{\rm x} = \Delta_{\rm y} = 1/2$\\
		\hline
		\text{Transmit power} & $P_{\rm TX} = 5\, {\rm dBm}$\\
		\hline
		\text{Noise power} & $\sigma_{\rm B}^2 = \sigma_{\rm E}^2 = -75\, {\rm dBm}$\\
		\hline
	\end{tabular}
	\label{tab_param}
\end{table}

\section{Numerical Results} \label{section_numerical_results}

In this section, we first compare the performance of our proposed SGDA-based algorithm with the proposed low-complexity solution for beamfocusing design with RCPZ. Then, we investigate the secrecy performance of beamfocusing design with RCVPZ. The simulation setup is provided in Table~\ref{tab_param}. The worst-case secrecy performance of all solutions is obtained by finding the lowest secrecy capacity for 100 random initializations of Eve's position $\left(\B{p}_{{\rm E},1}^0,\B{p}_{{\rm E},2}^0\right) \in \mathcal{E}_1 \times \mathcal{E}_2$, where the neighborhoods $\mathcal{B}_{\rm 1}$ and $\mathcal{B}_{\rm 2}$ have a radius of $\delta_{\rm 1} = \delta_{\rm 2} = 1{\rm\, m}$.

\begin{figure}[t]
	\centering
	\subfigure[Bob located on the $z$-axis.]{
		\begin{minipage}[t]{0.48\linewidth}
			\centering
			\includegraphics[width=1\columnwidth]{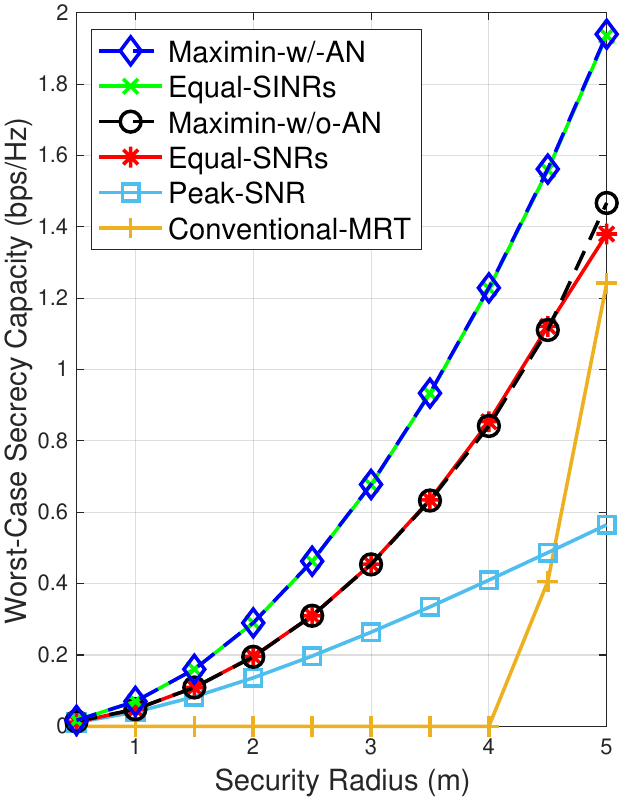}
			\label{fig_radius_a}
	\end{minipage}}
	\subfigure[Bob not located on the $z$-axis.]{
		\begin{minipage}[t]{0.48\linewidth}
			\centering
			\includegraphics[width=1\columnwidth]{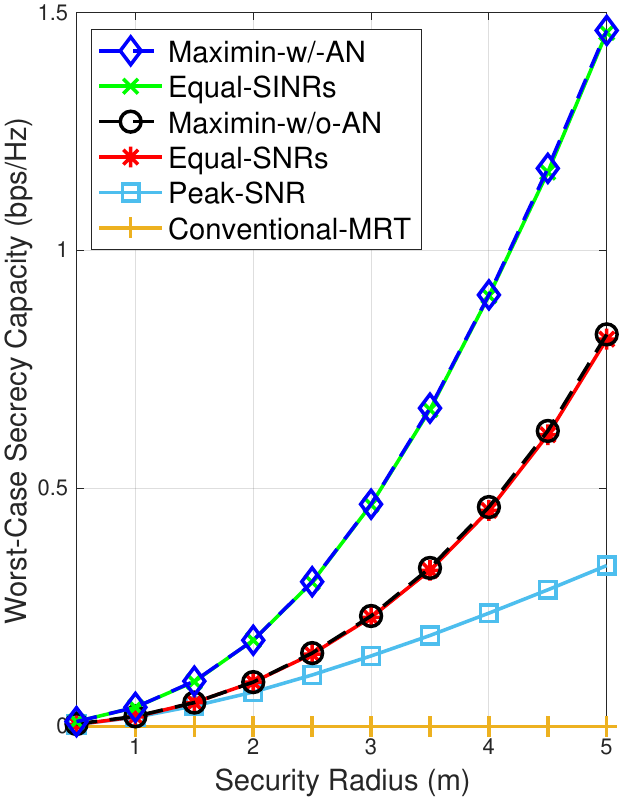}
			\label{fig_radius_b}
	\end{minipage}}
	\centering
	\caption{Worst-case secrecy capacity versus the security radius: a) Bob is located at $\B{p}_{\rm B} = \left(0,0,10\right)$; b) Bob is located at $\B{p}_{\rm B} = (5\sqrt{2}/2,5\sqrt{6}/2,5\sqrt{2})$ with the spherical coordinate $\left(\rho_{\rm B},\theta_{\rm B},\phi_{\rm B}\right) = \left(10,60^\circ,45^\circ\right)$.}
	\label{fig_radius}
\end{figure}

\subsection{Worst-Case Secrecy Performance with RCPZ}

Here we introduce two non-AN-aided (i.e., $\phi=1$) baseline beamfocusing designs for comparison: (i) \textbf{Conventional-MRT}: the most intuitive approach is to set the FP exactly at Bob's position, i.e., $\B{p}_{\rm F} = \B{p}_{\rm B}$, as discussed in the near-field communications literature (e.g.,~\cite{Anaya22, ZhengZhang24, YunpuZhang25, ZhifengTang25}); (ii) \textbf{Peak-SNR}: a similar solution that also aims to solely benefit Bob is to set the FP such that the maximum received SNR is at Bob's position. The computational complexity of both solutions are also listed in Table~\ref{table_complexity_RCPZ}.

In Fig.~\ref{fig_radius}, we compare the worst-case secrecy capacity of all solutions versus the security radius of the RCPZ. Fig.~\ref{fig_radius} investigates the secrecy performance for two distinct positions for Bob, both of which are $10\,{\rm m}$ away from Alice but in different locations. First, we see that the Maximin solutions provide positive secrecy capacities even with a very small RCPZ (e.g., with a security radius less than $1\,{\rm m}$) and their performance improves substantially with a larger RCPZ. This highlights the benefit of carefully-designed near-field beamfocusing for location-based secure communications. In addition, the secrecy capacity offered by the Maximin-w/-AN solution is significantly greater than that of Maximin-w/o-AN, showing the notable benefit of optimally allocating a portion of transmit power for AN to improve secrecy. Furthermore, it is interesting to observe that the secrecy performance is unsatisfactory when the FP is set to Bob's location, i.e., the Conventional-MRT solution, which provides a positive secrecy capacity only if the security radius is relatively large, e.g., $R_{\rm S}>4\,{\rm m}$ in Fig.~\ref{fig_radius_a}. This is due to the fact that, although the choice of the FP means that Bob benefits from maximum antenna array gain~\cite{Bjornson21}, the large-scale path loss causes the maximum SNR point to deviate from the FP, and this deviation can be exploited by Eve when the RCPZ is not large enough to safeguard this position. In contrast, the Peak-SNR solution can guarantee positive secrecy capacity with a small security radius, as demonstrated in both Figs.~\ref{fig_radius_a}~and~\ref{fig_radius_b}. However, Peak-SNR does not take into account the possible positions of Eve and hence offers degraded secrecy performance. We see that the Equal-SINRs and Equal-SNRs solutions both provide much better secrecy performance, and their advantages over Peak-SNR increase with the security radius. We also see that the performance of Equal-SINRs and Equal-SNRs is very close to that of Maximin-w/-AN and Maximin-w/o-AN, highlighting the usefulness of our proposed low-complexity solutions. 

\begin{figure}[t]
	\centering
	\subfigure[Bob located on the $z$-axis.]{
		\begin{minipage}[t]{0.48\linewidth}
			\centering
			\includegraphics[width=1\columnwidth]{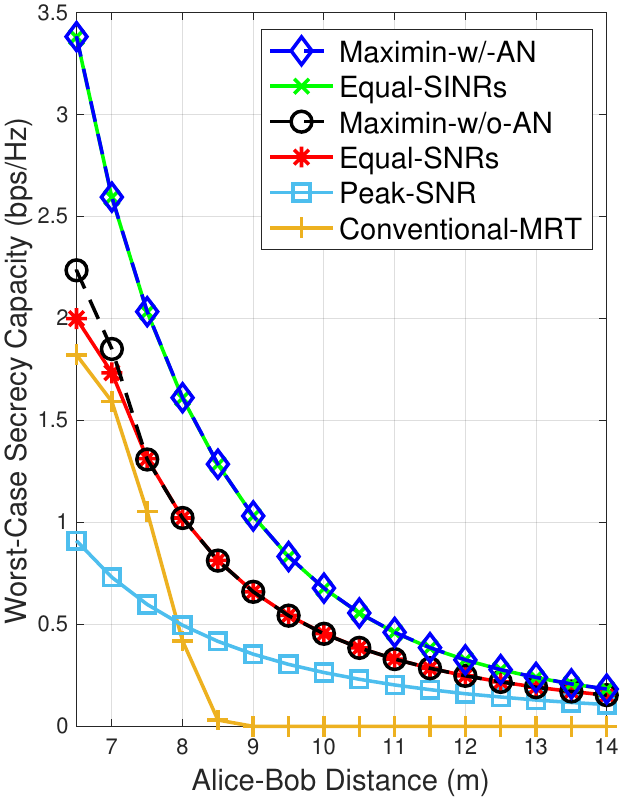}
			\label{fig_dist_a}
	\end{minipage}}
	\subfigure[Bob not located on the $z$-axis.]{
		\begin{minipage}[t]{0.48\linewidth}
			\centering
			\includegraphics[width=1\columnwidth]{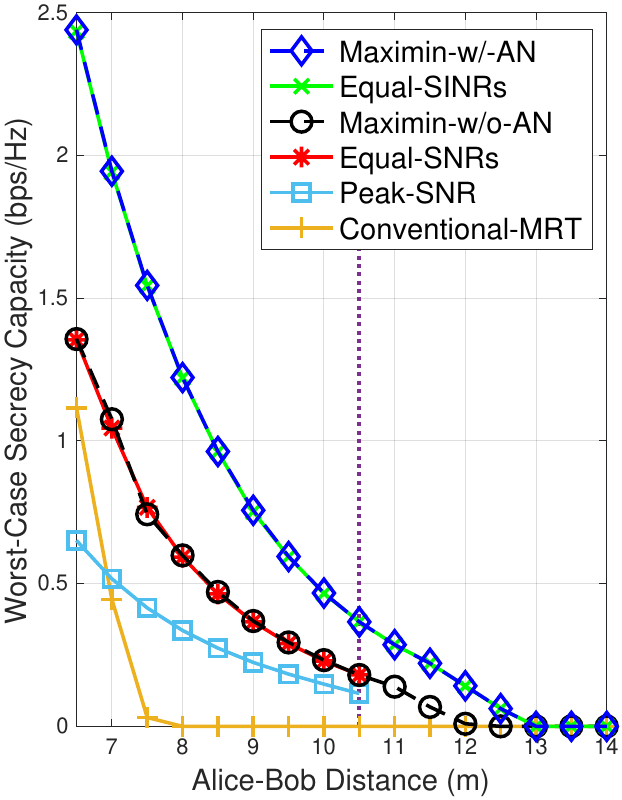}
			\label{fig_dist_b}
	\end{minipage}}
	\centering
	\caption{Worst-case secrecy capacity versus the distance between Alice and Bob when $R_{\rm S} = 3\,{\rm m}$: a) Bob is located at $\B{p}_{\rm B} = \left(0,0,d_{\rm AB}\right)$; b) Bob is located at the spherical coordinate $\left(\rho_{\rm B},\theta_{\rm B},\phi_{\rm B}\right) = \left(d_{\rm AB},60^\circ,45^\circ\right)$.}
	\label{fig_dist}
\end{figure}

\begin{figure}[b]
	\centering
	\subfigure[Minimum secrecy capacity versus the security radius.]{
		\begin{minipage}[t]{0.48\linewidth}
			\centering
			\includegraphics[width=1\columnwidth]{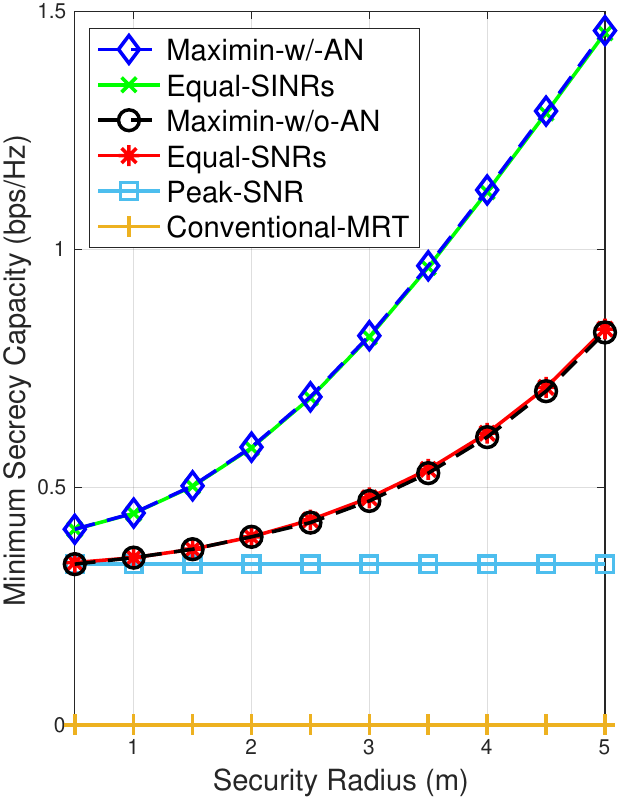}
			\label{fig_multiEve_a}
	\end{minipage}}
	\subfigure[Minimum secrecy capacity versus the distance between Alice and Bob.]{
		\begin{minipage}[t]{0.48\linewidth}
			\centering
			\includegraphics[width=1\columnwidth]{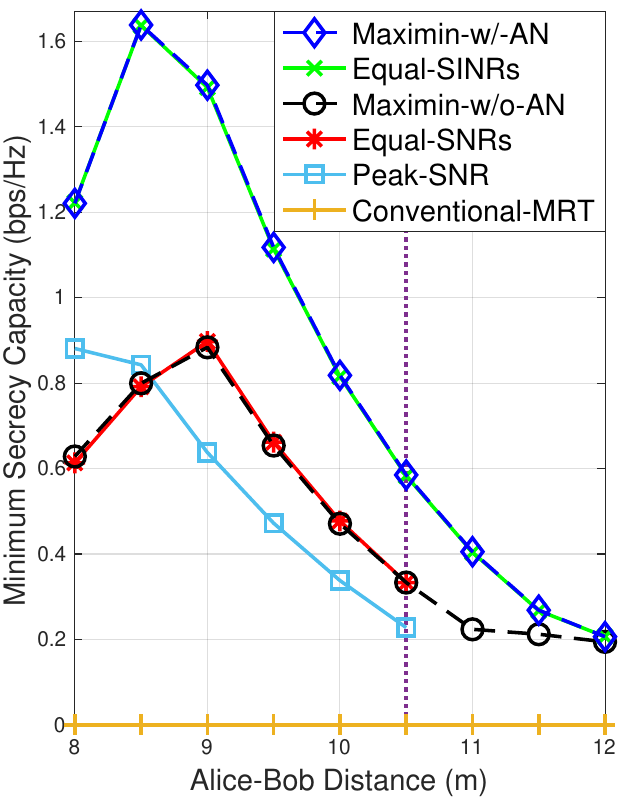}
			\label{fig_multiEve_b}
	\end{minipage}}
	\centering
	\caption{Minimum secrecy capacity with multiple non-colluding eavesdroppers: a) Bob is located at $\B{p}_{\rm B} = (5\sqrt{2}/2,5\sqrt{6}/2,5\sqrt{2})$ with the spherical coordinate $\left(\rho_{\rm B},\theta_{\rm B},\phi_{\rm B}\right) = \left(10,60^\circ,45^\circ\right)$; b) Bob is located at the spherical coordinate $\left(\rho_{\rm B},\theta_{\rm B},\phi_{\rm B}\right) = \left(d_{\rm AB},60^\circ,45^\circ\right)$ with $R_{\rm S} = 3\,{\rm m}$.}
	\label{fig_multiEve}
\end{figure}

In Fig.~\ref{fig_dist}, we compare the worst-case secrecy capacity of all solutions versus the distance between Alice and Bob. Fig.~\ref{fig_dist} explores the secrecy performance for two distinct locations for Bob with the same fixed security radius. Our observations for this case are similar to those in Fig.~\ref{fig_radius}, i.e., it is generally not good to locate either the FP or the maximum SNR point at Bob, while the Equal-SINRs and Equal-SNRs solutions provide good secrecy performance, with only a $0.1\%$-$10.6\%$ reduction in secrecy capacity compared with Maximin-w/-AN and Maximin-w/o-AN. Thus, the low-complexity Equal-SINRs and Equal-SNRs solutions are good candidates for practical implementations. Further discussion and analysis on the approximation of global optimality of Maximin-w/-AN and Maximin-w/o-AN solutions are provided in Appendix~\ref{appendix_D}.

Recall that this work adopts an analog as opposed to a digital beamfocusing design. In order to illustrate the performance difference between these two approaches, we regenerated numerical results for all considered beamfocusing solutions with an FP-based digital beamfocusing vector $\B{w}_{\rm F}^{\rm D} = \sqrt{\phi P_{\rm A}} \frac{\B{h}_{\rm F}}{\|\B{h}_{\rm F}\|}$ under the same settings specified in Figs.~\ref{fig_radius}~and~\ref{fig_dist}. These results are not presented here for brevity. We found that the average performance loss of analog versus digital beamfocusing is less than $1.3\%$ for all cases. The reason for such insignificant losses is that the amplitude/power variations across Alice's antenna elements perceived at the FP position become negligible when the distance from Alice to the FP is comparable to/larger than the \textit{uniform-power distance}~\cite{HaiquanLu22}, which makes analog beamfocusing with uniform element gains a good approximation to its digital counterpart.

In Fig.~\ref{fig_multiEve}, we consider a specific multi-eavesdropper scenario with four non-colluding Eves respectively located at $\frac{1}{2}\B{p}_0$, $\frac{3}{2}\B{p}_0$, $\B{p}_0+\left(5,0,0\right)$ and $\B{p}_0-\left(5,0,0\right)$ outside the RCPZ, where $\B{p}_0 = (5\sqrt{2}/2,5\sqrt{6}/2,5\sqrt{2})$. Note that this setup is similar to those considered in~\cite{Anaya22}. Figs.~\ref{fig_multiEve_a}~and~\ref{fig_multiEve_b} demonstrate the minimum secrecy capacity among the four Eves for different security radii and distances between Alice and Bob, respectively. We find that positive secrecy capacities are unachievable for the Conventional-MRT solution adopted in~\cite{Anaya22, ZhengZhang24, YunpuZhang25}~and~\cite{ZhifengTang25}. In addition, we observe that both the Maximin and low-complexity solutions achieve better secrecy performance than that of the baseline designs, which further verifies the robustness of the proposed solutions.

\begin{figure}[t]
	\centering
	\subfigure[Worst-case secrecy capacity versus the AN transmit power at Bob.]{
		\begin{minipage}[t]{0.48\linewidth}
			\centering
			\includegraphics[width=1\columnwidth]{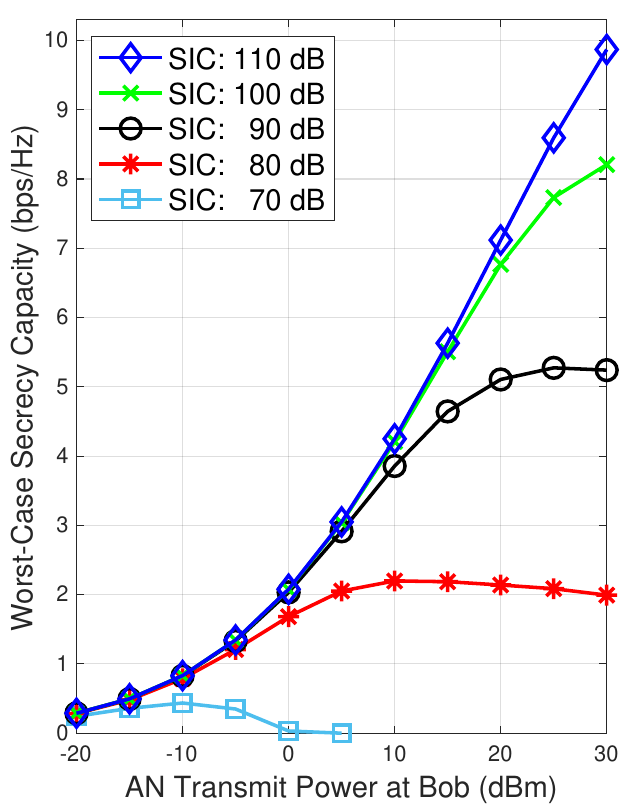}
			\label{fig_Cs_BobPower}
	\end{minipage}}
	\subfigure[Worst-case secrecy capacity versus the SIC performance of Bob.]{
		\begin{minipage}[t]{0.48\linewidth}
			\centering
			\includegraphics[width=1\columnwidth]{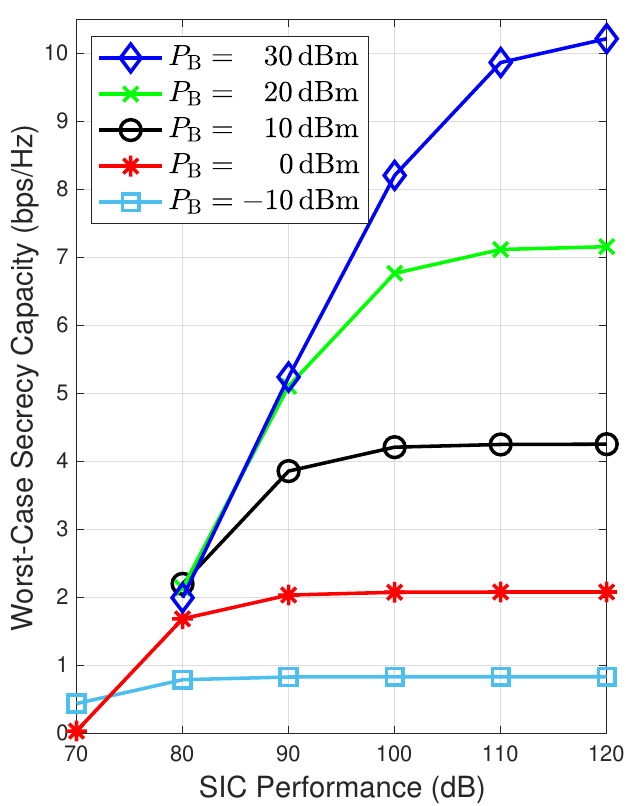}
			\label{fig_Cs_SIC}
	\end{minipage}}
	\centering
	\caption{Secrecy performance with RCVPZ for different AN transmit power and SIC performance when Bob is located at $\B{p}_{\rm B} = \left(0,0,10\right)$.}
	\label{fig_RCVPZ_performance}
\end{figure}

Finally, we observe from both Figs.~\ref{fig_radius} and \ref{fig_dist} that the secrecy performance generally becomes worse when Bob's position deviates from a direction perpendicular to Alice's UPA. This phenomenon is caused by the reduction in \textit{effective antenna array aperture}~\cite{YuanweiLiu23} as Bob moves away from the direction normal to the antenna array surface. Also, this is the reason why the performance curves of the Peak-SNR and Equal-SNRs solutions both terminate at the distance $d_{\rm AB}=10.5\,{\rm m}$ as indicated by the vertical dotted line in Figs.~\ref{fig_dist_b}~and~\ref{fig_multiEve_b}. Indeed, the global maximum SNR point cannot be located farther from Alice, due to the degraded beamfocusing performance caused by the reduced effective antenna array aperture.

\subsection{Worst-Case Secrecy Performance with RCVPZ}

\begin{figure}[t]
	\centering
	\subfigure[Optimal power allocation at Alice versus the AN transmit power at Bob.]{
		\begin{minipage}[t]{0.48\linewidth}
			\centering
			\includegraphics[width=1\columnwidth]{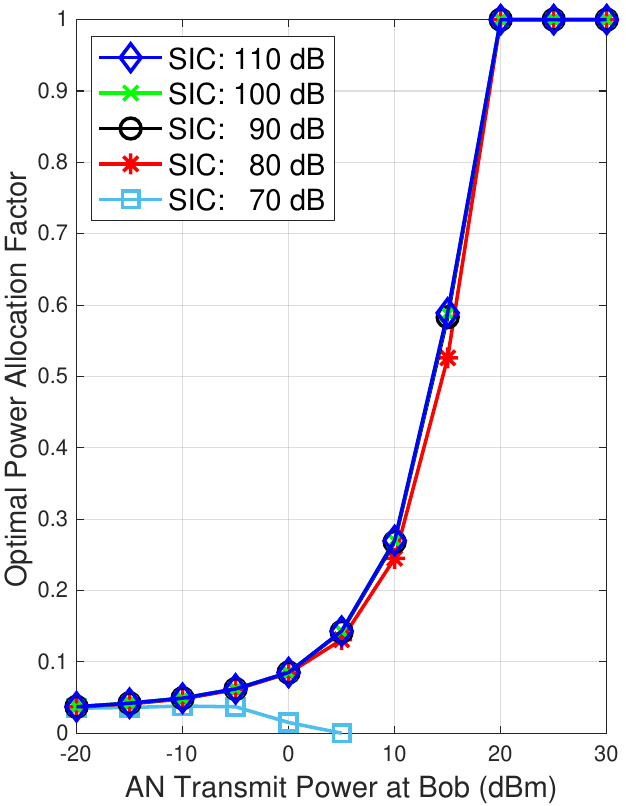}
			\label{fig_PHI_BobPower}
	\end{minipage}}
	\subfigure[Virtual security radius versus the SIC performance of Bob.]{
		\begin{minipage}[t]{0.48\linewidth}
			\centering
			\includegraphics[width=1\columnwidth]{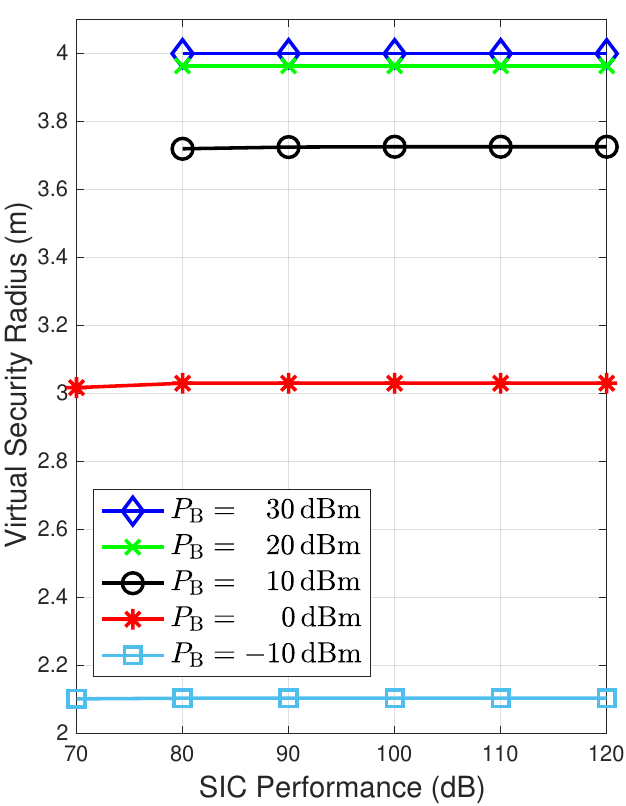}
			\label{fig_VirtualRadius_SIC}
	\end{minipage}}
	\centering
	\caption{Optimal power allocation strategy for Alice and the size of RCVPZ enabled by a full-duplex Bob for different AN transmit power and SIC performance when Bob is located at $\B{p}_{\rm B} = \left(0,0,10\right)$.}
	\label{fig_RCVPZ_behavior}
\end{figure}

In this subsection, we present numerical results for the extended SGDA-based design to analyze secrecy performance with a full-duplex Bob.

Figs.~\ref{fig_Cs_BobPower}~and~\ref{fig_Cs_SIC} depict the worst-case secrecy capacity versus the AN power and the SIC performance at Bob, respectively. In Fig.~\ref{fig_Cs_BobPower}, we observe that the worst-case secrecy capacity does not always increase with the AN power at Bob, as the optimal AN power depends on Bob's SIC performance, e.g., $P_{\rm B}^\star = 10\, {\rm dBm}$ for a SIC performance of $80\, {\rm dB}$ and $P_{\rm B}^\star = 25\, {\rm dBm}$ for a SIC performance of $90\, {\rm dB}$. This is due to the fact that as the AN power at Bob grows, the deterioration caused by undesired residual self-interference gradually outweighs the benefit of the intentional interference at the eavesdropper. In Fig.~\ref{fig_Cs_SIC}, we find that the worst-case secrecy capacity continually increases with SIC performance until saturation, e.g., when $P_{\rm B} = 0\, {\rm dBm}$, we see that $C_{\rm S} = 2.1\, {\rm bps/Hz}$ is the highest secrecy performance that can be reached when SIC suppresses the self-interference below the noise floor.

In Fig.~\ref{fig_RCVPZ_behavior}, we illustrate the optimal power allocation strategy at Alice as well as the virtual security radius for different AN transmit powers and SIC performance at Bob. In Fig.~\ref{fig_PHI_BobPower}, when the AN transmit power at Bob is very small, Alice should deploy an overwhelmingly large proportion of transmit power for AN to ensure adequate interference at Eve, e.g., the optimal power allocation factor is around $5\%$ when $P_{\rm B}$ is less than $-10\, {\rm dBm}$. As the AN transmit power at Bob increases beyond $20\, {\rm dBm}$, the optimal strategy at Alice quickly approaches one of the two extreme cases: (i) Alice either allocates all transmit power to the information signal if Bob has relatively good SIC, e.g., no less than $80\, {\rm dB}$; or (ii) Alice terminates transmission if Bob's SIC is poor, e.g., $70\, {\rm dB}$ or less, because no positive secrecy capacity can be achieved even with an optimized design. The primary reason for the two extreme cases is the different levels of residual self-interference that results from Bob's SIC, while the interference received at Eve from Bob is not impacted by Bob's SIC performance. When the AN transmit power at Bob is large, good SIC performance means that Bob can maintain a fairly low level of residual self-interference while Eve suffers from strong interference. Since Bob already has a significant advantage over Eve in terms of received interference, there is no need, and in fact it is inefficient, for Alice to also transmit AN. However, poor SIC performance at Bob leads to strong self-interference at Bob that exceeds the AN received by Eve. When the self-interference at Bob is strong, it eventually becomes impossible for Alice to sufficiently increase the secrecy capacity by allocating more power to AN; hence, a positive secrecy capacity becomes unachievable. In Fig.~\ref{fig_VirtualRadius_SIC}, we observe that the virtual security radius increases with the AN power at Bob but is insensitive to Bob's SIC performance. This is because the worst-case eavesdropping location only depends on the SINR at Eve but not that at Bob. We also note that the law of diminishing returns applies to increasing the virtual security radius when $P_{\rm B}\geq20\, {\rm dBm}$. Last but not least, we find that a positive secrecy capacity may not be achievable when Bob's SIC is poor. Therefore, in Figs.~\ref{fig_RCVPZ_performance}~and~\ref{fig_RCVPZ_behavior}, we do not show the numerical results at operating conditions where positive secrecy capacity is unachievable.

\section{Conclusion} \label{section_conclusion}

In this work, we theoretically examined how an RCPZ can be exploited in conjunction with beamfocusing design to enable near-field secure communications. An NP-hard max-min optimization problem was formulated for the worst-case secrecy capacity, to which the approximate global optimal solution was obtained using an SGDA framework. Furthermore, we proposed low-complexity beamfocusing solutions that demonstrate good secrecy performance. Additionally, we showed how an RCVPZ can be created and exploited by a full-duplex receiver to enable near-field secure communications and highlighted the impact of the SIC performance.

There are several interesting future research directions related to RCPZ-empowered near-field physical-layer security. First, given the inevitably imperfect channel estimation and/or inaccurate localization of the legitimate receiver, the beamfocusing location selectivity can be a double-edged sword that may jeopardize the secrecy performance. Hence, robust designs such as beam broadening are desirable for reducing the performance loss, enhancing the system's immunity against such uncertainties. Second, the secrecy performance with RCPZ can be further enhanced by optimizing the spatial distribution of the AN, but this is challenging when large-scale antenna arrays are employed, due to the correspondingly large AN covariance matrix. As such, efficient algorithm design is needed to handle the resultant large-scale optimization problems. Third, in multiuser communications, spatially adjacent near-field legitimate receivers may be treated as a single group protected by a common RCPZ, which simplifies the system design. In such cases, multiuser fairness can be guaranteed by proper design of the user groups, the power allocated to the groups, and the multiuser beamfocusing. We note that the proposed SGDA framework and low-complexity solutions can serve as baseline schemes for the special case where each legitimate receiver has its own individual RCPZ. Fourth, in hybrid near- and far-field communications, the co-design of an RCPZ and a conventional transmitter-side protected zone is worth investigating, as it can simultaneously achieve security for both near- and far-field receivers.

\appendix

\section{}

\subsection{Proof of Proposition \ref{proposition_1}} \label{appendix_A}

Suppose that for the given feasible $\left\{\bar\phi, \B{\bar p}_{\rm F}, \B{\bar p}_{\rm E}\right\}$, there exists a $P_{\rm A}^\prime\in\left[0,P_{\rm TX}\right]$ such that $C_{\rm S} (P_{\rm A}^\prime, \bar\phi, \B{\bar p}_{\rm F}, \B{\bar p}_{\rm E}) > 0$. Otherwise, $C_{\rm S} (P_{\rm A}, \bar\phi, \B{\bar p}_{\rm F}, \B{\bar p}_{\rm E}) = 0$ holds for the given $\left\{\bar\phi, \B{\bar p}_{\rm F}, \B{\bar p}_{\rm E}\right\}$ and any $P_{\rm A}\in\left[0,P_{\rm TX}\right]$, which is a trivial case indicating that the maximization can be achieved by $P_{\rm A}^{\rm max} = P_{\rm TX}$.

For the given $\left\{\bar\phi, \B{\bar p}_{\rm F}, \B{\bar p}_{\rm E}\right\}$, the secrecy capacity in (P2) can be written as a function of $P_{\rm A}$ as
\begin{align}
	C_{\rm S} = \log_2\left(1+\frac{aP_{\rm A}}{\sigma_{\rm B}^2}\right) - \log_2\left(1+\frac{bP_{\rm A}}{cP_{\rm A}+\sigma_{\rm E}^2}\right),
\end{align}
where $a = \bar\phi|\B{h}_{\rm B}\CT \B{\bar h}_{\rm F}|^2 > 0$, $b = \bar\phi|\B{h}_{\rm E}\CT \B{\bar h}_{\rm F}|^2 \geq 0$, and $c = \frac{1-\bar\phi}{N-1}\|\B{h}_{\rm E}\CT\B{Z}\|^2 \geq 0$ are non-negative real numbers. The derivative of the secrecy capacity with respect to the transmit power at Alice can be calculated by $\partial C_{\rm S} / \partial P_{\rm A} = f/g$ with
\begin{align}
	g = \ln2 \left(aP_{\rm A}+\sigma_{\rm B}^2\right) \left(cP_{\rm A}+\sigma_{\rm E}^2\right) \left[\left(b+c\right)P_{\rm A}+\sigma_{\rm E}^2\right],
\end{align}
the value of which is always positive for any $P_{\rm A}\in\left[0,P_{\rm TX}\right]$.

On the one hand, when $c>0$,
\begin{align}
	f = ac(b+c)P_{\rm A}^2 + 2ac\sigma_{\rm E}^2P_{\rm A} + \sigma_{\rm E}^2(a\sigma_{\rm E}^2-b\sigma_{\rm B}^2)
\end{align}
is a convex quadratic function of $P_{\rm A}$ with axis of symmetry $P_{\rm A}^{\rm s} = -\sigma_{\rm E}^2 / (b+c) < 0$. When $b/a \leq \sigma_{\rm E}^2/\sigma_{\rm B}^2$, we have $f\geq0$ for any $P_{\rm A}\in\left[0,P_{\rm TX}\right]$. Then $C_{\rm S}$ is a non-decreasing function of $P_{\rm A}$, such that the maximum is achieved by $P_{\rm A}^{\rm max} = P_{\rm TX}$. When $b/a > \sigma_{\rm E}^2/\sigma_{\rm B}^2$, we observe that $C_{\rm S}$ first decreases and then increases in $\left[0, P_{\rm TX}\right]$. Thus, the maximum secrecy capacity is achieved by either $0$ or $P_{\rm TX}$. Since $C_{\rm S}(P_{\rm TX}) \geq C_{\rm S}(P_{\rm A}^\prime) > 0 = C_{\rm S}(0)$, the maximum is achieved by $P_{\rm A}^{\rm max}=P_{\rm TX}$.

On the other hand, when $c=0$,
\begin{align}
	f = \sigma_{\rm E}^2(a\sigma_{\rm E}^2-b\sigma_{\rm B}^2)
\end{align}
is a constant function. We have $f>0$ following the assumption of $C_{\rm S}(P_{\rm A}^\prime) > 0$. Then $C_{\rm S}$ is an increasing function of $P_{\rm A}$, such that the maximum is achieved by $P_{\rm A}^{\rm max} = P_{\rm TX}$.

This completes the proof.

\begin{figure*}[b]
	\hrulefill
	\begin{align}
		\nabla_{\B{p}_{\rm E}} \check{C}_{\rm S}
		= \frac{1}{\ln2} \frac{\left|\B{h}_{\rm E}\CT \B{w}_{\rm F}\right|^2 \left[\frac{\left(1-\phi\right)P_{\rm A}}{N-1}\nabla_{\B{p}_{\rm E}}\|\B{h}_{\rm E}\CT\B{Z}\|^2 + \frac{P_{\rm B}}{2\kappa^2d_{\rm BE}^4}\left(\B{p}_{\rm B}-\B{p}_{\rm E}\right) \right] - \left[\frac{\left(1-\phi\right)P_{\rm A}}{N-1}\|\B{h}_{\rm E}\CT\B{Z}\|^2 + \frac{P_{\rm B}}{4\kappa^2 d_{\rm BE}^2} + \sigma_{\rm E}^2\right] \nabla_{\B{p}_{\rm E}}\left|\B{h}_{\rm E}\CT \B{w}_{\rm F}\right|^2}
		{\left[\frac{\left(1-\phi\right)P_{\rm A}}{N-1}\|\B{h}_{\rm E}\CT\B{Z}\|^2 + \frac{P_{\rm B}}{4\kappa^2 d_{\rm BE}^2} + \sigma_{\rm E}^2\right] \left[\left|\B{h}_{\rm E}\CT \B{w}_{\rm F}\right|^2 + \frac{\left(1-\phi\right)P_{\rm A}}{N-1}\|\B{h}_{\rm E}\CT\B{Z}\|^2 + \frac{P_{\rm B}}{4\kappa^2 d_{\rm BE}^2} + \sigma_{\rm E}^2\right]}
		\tag{54}
		\label{grad_SR_E}
	\end{align}
	
	\begin{align}
		\frac{1}{c} \min_{\B{p}_{\rm E}^\prime\in\mathcal{P}_{\rm S}} C_{\rm S} \left(\phi, \B{p}_{\rm F}, \B{p}_{\rm E}^\prime\right)
		\overset{\text{(a)}}{\leq}
		C_{\rm S} \left(\hat\phi^{\rm max}, \hat{\B{p}}_{\rm F}^{\rm max}, \hat{\B{p}}_{\rm E}^{\rm min}\right)
		\overset{\text{(b)}}{\leq}
		c\, C_{\rm S} \left(\hat\phi^{\rm max}, \hat{\B{p}}_{\rm F}^{\rm max}, \B{p}_{\rm E}\right),\
		\forall\left(\phi, \B{p}_{\rm F}, \B{p}_{\rm E}\right)\in\left[0,1\right]\times\mathcal{R}_{\rm S}\times\mathcal{P}_{\rm S}
		\tag{61}
		\label{approx_global_optimality}
	\end{align}
	
\end{figure*}

\subsection{Proof of Proposition \ref{proposition_2}} \label{appendix_B}

For the given $\left\{\B{\bar p}_{\rm F}, \B{\bar p}_{\rm E}\right\}$, the secrecy capacity in (P3) can be written as a function of $\phi$ as
\begin{align}
	C_{\rm S} = \log_2\left(1+\frac{\alpha\phi}{\sigma_{\rm B}^2}\right) - \log_2\left(1+\frac{\beta\phi}{\gamma\left(1-\phi\right)+\sigma_{\rm E}^2}\right),
\end{align}
where $\alpha = P_{\rm A}\left|\B{h}_{\rm B}\CT \B{\bar h}_{\rm F}\right|^2 > 0$, $\beta = P_{\rm A}\left|\B{h}_{\rm E}\CT \B{\bar h}_{\rm F}\right|^2 \geq 0$, and $\gamma = \frac{P_{\rm A}}{N-1} \|\B{h}_{\rm E}\CT\B{Z}\|^2 \geq 0$ are non-negative real numbers. The derivative of the secrecy capacity with respect to the power allocation factor can be calculated by $\partial C_{\rm S} / \partial \phi = h/l$ with
\begin{align}
	l = \ln2 \left(\alpha\phi+\sigma_{\rm B}^2\right) \left[\gamma\left(1-\phi\right)+\sigma_{\rm E}^2\right] \left[\gamma\left(1-\phi\right)+\beta+\sigma_{\rm E}^2\right],
\end{align}
which is always positive for any $\phi\in\left[0,1\right]$.

First, we consider $\beta\neq\gamma>0$, in which case
\begin{align}
	h = \alpha\gamma\left(\gamma-\beta\right)\phi^2 - 2\alpha\gamma\left(\gamma+\sigma_{\rm E}^2\right)\phi\notag\\
	+ \left(\gamma+\sigma_{\rm E}^2\right) \left[\alpha\left(\gamma+\sigma_{\rm E}^2\right) - \beta\sigma_{\rm B}^2\right]
\end{align}
is a quadratic function of $\phi$, whose axis of symmetry is $\phi_{\rm s} = \left(\gamma+\sigma_{\rm E}^2\right) / \left(\gamma-\beta\right)$ and whose discriminant is $\Delta = 4\alpha\beta\gamma\left(\gamma+\sigma_{\rm E}^2\right) \left[\left(\gamma-\beta\right)\sigma_{\rm B}^2+\alpha\left(\gamma+\sigma_{\rm E}^2\right)\right]$. We investigate the following two subcases:

i) $\beta<\gamma:$ In this subcase, the parabola $h$ is convex with axis of symmetry $\phi_{\rm s}>1$. On the one hand, when $\beta=0$, $h$ has a double root $\phi_{\rm s}$ resulting from $\Delta=0$, which indicates that $h>0$ for any $\phi\in\left[0,1\right]$. Then $C_{\rm S}$ is an increasing function of $\phi$, such that the maximum is achieved by $\phi^{\rm max} = 1$. On the other hand, when $\beta>0$, $h$ has two distinct real roots $\phi_a = \phi_{\rm s} - \frac{\sqrt{\Delta}}{2\alpha\gamma\left(\gamma-\beta\right)} < \phi_a^\prime = \phi_{\rm s} + \frac{\sqrt{\Delta}}{2\alpha\gamma\left(\gamma-\beta\right)}$ resulting from $\Delta>0$, which indicates that $h>0$ for $\phi\in\left(0,\min\left\{\phi_a^+,1\right\}\right)$ and $h<0$ for $\phi\in\left(\min\left\{\phi_a^+,1\right\},1\right)$. This shows that $C_{\rm S}$ first increases and then decreases in $\left[0, 1\right]$, such that the maximum is achieved by $\phi^{\rm max} = \min\left\{\phi_a^+,1\right\}$.
	
ii) $\beta>\gamma:$ In this subcase, the parabola $h$ is concave with axis of symmetry $\phi_{\rm s}<0$. When $\Delta\leq0$, we have $h<0$ for any $\phi\in\left[0,1\right]$, which implies that $C_{\rm S}$ is a decreasing function of $\phi$, such that the maximum is achieved by $\phi^{\rm max} = 0$. On the other hand, when $\Delta>0$, $h$ has two distinct real roots $\phi_a > \phi_a^\prime$, which indicates that $h>0$ for $\phi\in\left(0,\min\left\{\phi_a^+,1\right\}\right)$ and $h<0$ for $\phi\in\left(\min\left\{\phi_a^+,1\right\},1\right)$. Thus, $C_{\rm S}$ first increases and then decreases in $\left[0, 1\right]$, such that the maximum is achieved by $\phi^{\rm max} = \min\left\{\phi_a^+,1\right\}$.

Second, we consider $\beta=\gamma>0$, in which case
\begin{align}
	h = -2\alpha\beta\left(\beta+\sigma_{\rm E}^2\right)\phi + \left(\beta+\sigma_{\rm E}^2\right) \left[\alpha\left(\beta+\sigma_{\rm E}^2\right) - \beta\sigma_{\rm B}^2\right]
\end{align}
is a monotonically decreasing linear function with root $\phi_b = \frac{1}{2}+\frac{\alpha\sigma_{\rm E}^2-\beta\sigma_{\rm B}^2}{2\alpha\beta}$, which indicates that $h>0$ for $\phi\in\left(0,\min\left\{\phi_b^+,1\right\}\right)$ and $h<0$ for $\phi\in\left(\min\left\{\phi_b^+,1\right\},1\right)$. This implies that $C_{\rm S}$ first increases and then decreases in $\left[0, 1\right]$, such that the maximum is achieved by $\phi^{\rm max} = \min\left\{\phi_b^+,1\right\}$.

Last, we consider $\gamma=0$, in which case
\begin{align}
	h = \sigma_{\rm E}^2 \left(\alpha\sigma_{\rm E}^2 - \beta\sigma_{\rm B}^2\right)
\end{align}
is a constant function. Therefore, the maximum is achieved by $\phi^{\rm max} = H\left(\alpha\sigma_{\rm E}^2 - \beta\sigma_{\rm B}^2\right)$, where $H\left(\cdot\right)$ is the Heaviside step function defined by
\begin{align}
	H\left(x\right) = 
	\left\{
	\begin{aligned}
		&1, &x \geq 0,\\
		&0, &x < 0.
	\end{aligned}
	\right.
\end{align}
This completes the proof.

\subsection{Gradient Expressions in the Proposed SGDA Framework} \label{appendix_C}

In this appendix, we provide gradient expressions for the most general forms of $\nabla_{\B{p}_{\rm F}} \check{C}_{\rm S}$ and $\nabla_{\B{p}_{\rm E}} \mathcal{\check L}_\varrho$ mentioned in Section~\ref{section_extended_SGDA}. Setting $P_{\rm B}=0$ in the provided expressions below, the gradients $\nabla_{\B{p}_{\rm F}} C_{\rm S}$ and $\nabla_{\B{p}_{\rm E}} \mathcal{L}_\varrho$ used in Algorithm~\ref{alg_SGDA} can be obtained. On the other hand, when $\phi=1$, the gradients for the special case where no AN signal is transmitted by Alice can be obtained. 

First, the gradient of the secrecy capacity with respect to the FP position can be calculated by
\begin{align}
	&\nabla_{\B{p}_{\rm F}} \check{C}_{\rm S} = \frac{1}{\ln2} \Bigg[\frac{\nabla_{\B{p}_{\rm F}}\left|\B{h}_{\rm B}\CT \B{w}_{\rm F}\right|^2}{\left|\B{h}_{\rm B}\CT \B{w}_{\rm F}\right|^2 + \rho P_{\rm B} + \sigma_{\rm B}^2}\notag\\
	&- \frac{\nabla_{\B{p}_{\rm F}}\left|\B{h}_{\rm E}\CT \B{w}_{\rm F}\right|^2}{\left|\B{h}_{\rm E}\CT \B{w}_{\rm F}\right|^2 + \frac{\left(1-\phi\right)P_{\rm A}}{N-1}\|\B{h}_{\rm E}\CT\B{Z}\|^2 + \frac{P_{\rm B}}{4\kappa^2 d_{\rm BE}^2} + \sigma_{\rm E}^2}\Bigg],
	\tag{51}
	\label{grad_Cs_pF}
\end{align}
where $\nabla_{\B{p}_{\rm F}}\left|\B{h}_{\rm I}\CT \B{w}_{\rm F}\right|^2\in\mathbb{R}^3,{\rm I}\in\left\{{\rm B},{\rm E}\right\}$, is expressed as
\begin{align}
	\nabla_{\B{p}_{\rm F}}\left|\B{h}_{\rm I}\CT \B{w}_{\rm F}\right|^2 = 2\kappa\phi P_{\rm A} \sum_{k=1}^{N} \Re\Bigg( jh_{\rm I}^k\bar{h}_{\rm F}^{k*}\sum_{\substack{i=1\\i\neq k}}^{N}h_{\rm I}^{i*}\bar{h}_{\rm F}^{i} \Bigg) \frac{\B{p}_{\rm F}-\B{p}_{\rm A}^k}{d_{\rm F}^k}.
	\tag{52}
\end{align}

Second, the gradient of the AL function with respect to the position of the eavesdropper can be calculated by
\begin{align}
	\nabla_{\B{p}_{\rm E}} \mathcal{\check L}_\varrho = \nabla_{\B{p}_{\rm E}} \check{C}_{\rm S} + 2\varrho\sum_{k=0}^{N} \B{g}_k,
	\tag{53}
	\label{grad_AL}
\end{align}
where the gradient of the secrecy capacity with respect to the position of the eavesdropper is given by (\ref{grad_SR_E}), at the bottom of this page. In (\ref{grad_SR_E}), we have

\begin{align}
	\nabla_{\B{p}_{\rm E}}\left|\B{h}_{\rm E}\CT \B{w}_{\rm F}\right|^2
	= \sum_{k=1}^{N} \Bigg[ &2\Re\Bigg(\left( j\kappa - \frac{1}{d_{\rm E}^k} \right) w_{\rm F}^k h_{\rm E}^{k*}\sum_{\substack{i=1\\i\neq k}}^{N}w_{\rm F}^{i*} h_{\rm E}^i \Bigg)\notag\\
	&- \frac{\left|w_{\rm F}^k\right|^2}{2\kappa^2 {d_{\rm E}^k}^3} \Bigg] \frac{\B{p}_{\rm E}-\B{p}_{\rm A}^k}{d_{\rm E}^k},
	\tag{55}
\end{align}
and $\nabla_{\B{p}_{\rm E}}\|\B{h}_{\rm E}\CT\B{Z}\|^2 = \sum_{m=1}^{N-1} \nabla_{\B{p}_{\rm E}}\left|\B{h}_{\rm E}\CT\B{z}_m\right|^2$ with
\begin{align}
	\nabla_{\B{p}_{\rm E}}\left|\B{h}_{\rm E}\CT\B{z}_m\right|^2
	= \sum_{k=1}^{N} \Bigg[ &2\Re\Bigg(\left( j\kappa - \frac{1}{d_{\rm E}^k} \right) z_m^k h_{\rm E}^{k*}\sum_{\substack{i=1\\i\neq k}}^{N}z_m^{i*} h_{\rm E}^i \Bigg)\notag\\
	&- \frac{\left|z_m^k\right|^2}{2\kappa^2 {d_{\rm E}^k}^3} \Bigg] \frac{\B{p}_{\rm E}-\B{p}_{\rm A}^k}{d_{\rm E}^k}.
	\tag{56}
\end{align}
In addition, $\B{g}_k\in\mathbb{R}^3, k=\left\{0,1,\ldots,N\right\}$ are given by
\begin{align}
	\B{g}_0 &= \left( R_{\rm S}^2 - \left\|\B{p}_{\rm E}-\B{p}_{\rm B}\right\|^2 + \frac{\mu_0}{\varrho} \right)^+\left(\B{p}_{\rm B}-\B{p}_{\rm E}\right), \tag{57}\\
	\B{g}_k &= \left( d_{\rm R}^2 - \left\|\B{p}_{\rm E}-\B{p}_{\rm A}^k\right\|^2 + \frac{\mu_k}{\varrho} \right)^+\left(\B{p}_{\rm A}^k-\B{p}_{\rm E}\right).\tag{58}
\end{align}

\subsection{Analysis of the Approximate Global Optimality of the Maximin Solution} \label{appendix_D}

Suppose that the global solution to the max-min problem in (P3) is given by $\left(\phi^{\rm max}, \B{p}_{\rm F}^{\rm max},\B{p}_{\rm E}^{\rm min}\right)$. For the inner minimization subproblem, we have the following inequality
\begin{align}
	& C_{\rm S} \left(\phi^{\rm max}, \B{p}_{\rm F}^{\rm max}, \B{p}_{\rm E}^{\rm min}\right)
	=\min_{\B{p}_{\rm E}\in\mathcal{P}_{\rm E}} C_{\rm S} \left(\phi^{\rm max}, \B{p}_{\rm F}^{\rm max}, \B{p}_{\rm E}\right)\notag\\
	& \leq C_{\rm S} \left(\phi^{\rm max}, \B{p}_{\rm F}^{\rm max}, \B{p}_{\rm E}\right),\ \forall\B{p}_{\rm E}\in\mathcal{P}_{\rm E},
	\tag{59}
	\label{RHS}
\end{align}
where $\mathcal{P}_{\rm E} = \{\B{p}_{\rm E}\in\mathbb{R}^3| \|\B{p}_{\rm E}-\B{p}_{\rm B}\|\geq R_{\rm S}, \|\B{p}_{\rm E}-\B{p}_{\rm A}^k\| \geq d_{\rm R}, k=1,2,\ldots,N\}$ is the feasible region of the eavesdropping position. For the outer maximization subproblem, we have the following inequality
\begin{align}
	& C_{\rm S} \left(\phi^{\rm max}, \B{p}_{\rm F}^{\rm max}, \B{p}_{\rm E}^{\rm min}\right)
	= \max_{\phi, \B{p}_{\rm F}}\ \min_{\B{p}_{\rm E}^\prime}\ C_{\rm S} \left(\phi, \B{p}_{\rm F}, \B{p}_{\rm E}^\prime\right)\notag\\
	&\geq\min_{\B{p}_{\rm E}^\prime\in\mathcal{P}_{\rm E}} C_{\rm S} \left(\phi, \B{p}_{\rm F}, \B{p}_{\rm E}^\prime\right),\
	\forall \left(\phi, \B{p}_{\rm F}\right)\in\left[0,1\right]\times\mathcal{R}_{\rm B}.
	\tag{60}
	\label{LHS}
\end{align}
We clarify that (\ref{RHS}) and (\ref{LHS}) jointly define the global optimality condition for the max-min problem in (P3)~\cite{ChiJin20}.

Since finding a globally optimal solution that satisfies the conditions in (\ref{RHS}) and (\ref{LHS}) is NP-hard, the SGDA framework provides the Maximin solution $(\hat\phi^{\rm max},\hat{\B{p}}_{\rm F}^{\rm max},\hat{\B{p}}_{\rm E}^{\rm min})$ satisfying the approximate relaxed global optimality condition defined in (\ref{approx_global_optimality}), at the bottom of the previous page. The approximation involves two parts: i) the conditions in (\ref{RHS}) and (\ref{LHS}) are relaxed as (\ref{approx_global_optimality}a) and (\ref{approx_global_optimality}b) via an \textit{approximation ratio} $c>1$~\cite{TheDesignofApproxAlgs}, and ii) feasible regions $\mathcal{R}_{\rm B}$ and $\mathcal{P}_{\rm E}$ are respectively sampled to create the corresponding finite sets $\mathcal{R}_{\rm S}$ and $\mathcal{P}_{\rm S}$. The smaller the approximation ratio, the closer the Maximin solution is to global optimality.

Let $c = \max\{c_1,c_2\}$, where $c_1$ and $c_2$ are the approximation ratios of the maximization and minimization subproblems, respectively expressed as:
\begin{align}
	c_1 = \frac{\max_{\B{p}_{\rm F}\in\mathcal{R}_{\rm S}}\min_{\B{p}_{\rm E}\in\mathcal{P}_{\rm S}} C_{\rm S} \left(\Phi\left(\B{p}_{\rm F},\B{p}_{\rm E}\right), \B{p}_{\rm F}, \B{p}_{\rm E}\right)}{C_{\rm S} \left(\hat\phi^{\rm max}, \hat{\B{p}}_{\rm F}^{\rm max}, \hat{\B{p}}_{\rm E}^{\rm min}\right)} \tag{62}\label{approx_c1}
\end{align}
and
\begin{align}
	c_2 = \frac{C_{\rm S} \left(\hat\phi^{\rm max}, \hat{\B{p}}_{\rm F}^{\rm max}, \hat{\B{p}}_{\rm E}^{\rm min}\right)}{\min_{\B{p}_{\rm E}\in\mathcal{P}_{\rm S}} C_{\rm S} \left(\hat\phi^{\rm max}, \hat{\B{p}}_{\rm F}^{\rm max}, \B{p}_{\rm E}\right)}. \tag{63}\label{approx_c2}
\end{align}
We define $\mathcal{R}_{\rm S}$ as the set that results from uniformly sampling $\mathcal{R}_{\rm B}$ for $d_{\rm F}\in\left[0.1d_{\rm AB}, 10d_{\rm AB}\right]$, which has size $|\mathcal{R}_{\rm S}| = 10^4$. We define $\mathcal{P}_{\rm S}$ as the union of $\mathcal{E}_{{\rm S}1}$ and $\mathcal{E}_{{\rm S}2}$ with sizes $|\mathcal{E}_{{\rm S}1}| = |\mathcal{E}_{{\rm S}2}| = 10^4$ that are uniformly sampled from $\mathcal{E}_1$ and $\mathcal{E}_2$ with radii $\delta_{\rm 1} = \delta_{\rm 2} = 1{\rm\, m}$, respectively. Note that the sampling of $\mathcal{P}_{\rm E}$ is performed within $\mathcal{E}_1$ and $\mathcal{E}_2$ since the secrecy capacity for an eavesdropper located elsewhere would be no smaller, as analyzed in Section~\ref{section_SGDA}. For any given Maximin solution $(\hat\phi^{\rm max},\hat{\B{p}}_{\rm F}^{\rm max},\hat{\B{p}}_{\rm E}^{\rm min})$ generated by the SGDA framework, its approximation ratio can be evaluated by exhaustive search based on (\ref{approx_c1}) and (\ref{approx_c2}).

As an example, we evaluate the approximation ratio for Maximin solutions shown in Figs.~\ref{fig_radius}~and~\ref{fig_dist}. We find that the corresponding approximation ratio is no more than $1.15$ for all operating conditions. This result indicates that, under the considered settings, the SGDA framework offers a computational tractable way to generate high quality, approximately optimal solutions to the max-min problem in (P3).

%
%
%
%
%
%
%
%

\ifCLASSOPTIONcaptionsoff
\newpage
\fi

\bibliographystyle{IEEEtran}

\bibliography{Ref_NF_PLS}

 





\vfill

\end{document}